\documentclass{ifacconf}

\usepackage{graphicx} % include this line if your document contains figures
\usepackage{natbib} % required for bibliography

\usepackage{amsmath, amsfonts}
\usepackage{algorithmic}
\usepackage{algorithm}
\usepackage{array}
\usepackage{caption}
\usepackage{subcaption}
\usepackage{textcomp}
\usepackage{stfloats}
\usepackage{url}
\usepackage{verbatim}
\usepackage{bm}
\usepackage{subfiles}
\usepackage{comment}
\usepackage{enumitem}
\newtheorem{definition}{Definition}
\newtheorem{theorem}{Theorem}
\begin{document}
\begin{frontmatter}
    \title{Stealthy Coverage Control for Human-enabled Real-Time 3D
        Reconstruction\thanksref{footnoteinfo}}
    % Title, preferably not more than 10 words.

    \thanks[footnoteinfo]{This work has been supported by JSPS KAKENHI Grant Number 24K00906.}

    \author[First]{Reiji Terunuma}
    \author[First]{Yuta Nakamura}
    \author[First]{Takuma Abe}
    \author[First]{Takeshi Hatanaka}

    \address[First]{Institute of Science Tokyo (e-mail: \{terunuma,~nakamura.y,~abe.t\}@hfg.sc.e.titech.ac.jp,
        hatanaka@sc.e.titech.ac.jp).}

    \begin{abstract}                % Abstract of 50--100 words
In this paper, we propose a novel semi-autonomous image sampling strategy, called stealthy coverage control, for human-enabled 3D structure reconstruction. 
The present mission involves a fundamental problem: while the number of images required to accurately reconstruct a 3D model depends on the structural complexity of the target scene to be reconstructed, it is not realistic to assume prior knowledge of the spatially non-uniform structural complexity.
We approach this issue by leveraging human flexible reasoning and situational recognition capabilities. Specifically, we design a semi-autonomous system that leaves identification of regions that need more images and navigation of the drones to such regions to a human operator.
To this end, we first present a way to reflect the human intention in autonomous coverage control. Subsequently, in order to avoid operational conflicts between manual control and autonomous coverage control, we develop the stealthy coverage control that decouples the drone motion for efficient image sampling from navigation by the human. 
Simulation studies on a Unity/ROS2-based simulator demonstrate that the present semi-autonomous system outperforms the one without human interventions in the sense of the reconstructed model quality.

    %In this paper, we address coordinated image sampling with multi-drone systems for reconstructing a scene in three dimensions. %
    %To this end, we start with presenting a 3D coverage control strategy for efficiently sampling images from rich viewing angles. %
    %The number of images required for reconstructing an accurate 3D model depends on the actual structure of the scene which is a priori unknown. %
    %A promising approach to the issue is to leave the model evaluation in real
    %time and navigation of the drones to the areas that need more images to a
    %human operator. %
    %We thus present a novel control architecture coordinating a human and
    %multiple drones. %
    %Now, coverage control for efficient image sampling interferes the human operation. %
    %To address the issue, we present a novel stealthy coverage control such
    %that the drone actions driven by coverage control are ensured to be
    %invisible from the human operator. %
    %We finally demonstrate the present control strategy through human-in-the-loop
    %simulation.
\end{abstract} %

    \begin{keyword}
        Coverage control, Human-in-the-loop, Stealthy control, 3D
        reconstruction, CPHS
    \end{keyword}
\end{frontmatter}
%===============================================================================

%%============================================================================%%
\section{Introduction}
\label{chp:introduction}
%%============================================================================%%
Reconstructing three-dimensional (3D) 
structure from a collection of images has gained increasing interests and is expected to be
a key solution to various fields such as precision agriculture (\cite{Edmonds:2021}) 
and infrastructure inspection (\cite{Seraj:2020}).
The high-quality and accurate 3D structural models enable one
to observe and analyze the current state of the environment in greater
detail than ever before.

Autonomous control of drones is a promising approach to efficient image sampling for the 3D reconstruction. 
Coordinating multiple drones is also expected to further enhance the sampling efficiency, as compared to the single drone operation (\cite{Torres:2016}).
%employing multiple drones enables more efficient coverage and data sampling. %
Early approaches rely on precomputed flight paths (\cite{Xiao:2021}).
However, they are inherently inflexible against various disturbances and 
the change of the number of drones in operation.
Moreover, they also lack a systematic design for camera rotation control, despite the expectation that variable camera orientations would enhance the diversity of the viewing angles.
%as they cannot perform operations such as re-capturing specific areas or expanding 
%the coverage during flight.
A promising approach to ensuring flexibility is to employ coverage control 
%To address this inflexibility, we focus on coverage control 
(\cite{Cortes:2005}, \cite{Schwager:2011}, \cite{Palacios:2016}, \cite{Dan:2021}),
which deploys mobile robots over a mission space so that the environmental data are sampled efficiently.
%is an algorithmic framework for efficiently allocating multiple robots by dynamically determining control inputs based on the environment.
% Coverage control refers to algorithms that efficiently allocate multiple robots in space. %
%As an application, several studies (e.g. \cite{Palacios:2016}, \cite{Dan:2021})
%have proposed dynamically varying the importance of each location and
\cite{Shimizu:2022} proposed a coverage controller specialized to the image sampling for 3D reconstruction, 
where sampling images from diverse viewing angles is enforced by utilizing 
%control barrier functions
constraint-based control
(\citet{Magnus:2021}). 
%to enable persistent monitoring.
The control strategy was further extended to the case with camera rotational control (\cite{Lu:2024}), 
enabling systematic determination of camera orientations.
However, these solutions lack another kind of flexibility.
%To adopt coverage control framework for 3D reconstruction, \cite{Shimizu:2022} proposed multi-angle 
%coverage control for farmland mapping, which ensures accurate reconstruction by capturing the environment
%from various angles.
Namely, the number of images required to accurately reconstruct the 3D structure 
is highly dependent on the structural complexity of the scene, 
and this complexity is not uniform across the scene.
Since the distribution of the complexity is hardly assumed as prior knowledge, 
\cite{Shimizu:2022} and \cite{Lu:2024} required drones to uniformly sample images  over the scene.
To address the issue, \cite{Hanif:2025} presented a coordinated
image sampling framework that utilizes feedback from mesh changes for 
the structure model that evolves in real time.
However, detecting mesh changes is computationally expensive,
making the approach impractical for large-scale environment.

% However, even with such methods, the reconstructed models obtained through Structure from Motion (SfM) and Multi-View Stereo (MVS)
% still exhibits incomplete regions (\cite{Suenaga:2022}). %
% This is likely due to the fact that some images are not used in the reconstruction
% process, as they may be blurred or affected by light reflections during capture. %
% It is difficult to detect such deficiencies in advance. %
% Moreover, due to the long processing time required for SfM and MVS, it is also
% challenging to identify reconstruction flaws during drone operation. %
% Even if reconstruction could be completed in real time, evaluating its quality would
% still be difficult for a computer that does not perceive the real environment. %
% % On the other hand, humans are capable of forming mental models of 3D structures
% % from observed images or visual input. %
% In contrast, humans are able to interpret 3D structures from visual information internally. %
% By comparing the reconstructed models with their internal mental images, humans can
% point out inconsistencies or deficiencies. %
% Therefore, if reconstruction results could be obtained in real time, it would be
% possible for humans to assess areas or directions with insufficient accuracy and
% guide the drones to recapture those regions accordingly.

Another promising approach is to assume human interventions. Due to the high capability of the humans in flexible reasoning and situational recognition, 
a human is expected to identify the region having low reconstruction quality from the structural model given in real time.
Human-multiple-robot collaborations have been in-depth studied, as summarized in \cite{Hatanaka:2023}.
For example, stable navigation of multiple robots was investigated in \cite{Lee:2005}, \cite{Franchi:2012a}, \cite{Franchi:2012b}, \cite{Atman:2019} and \cite{Hatanaka:2024}, 
which would be useful for leading drones to a region with low model quality identified by the human operator.
However, these papers employed formation control or motion synchronization laws as inter-robot distributed autonomous control, 
and it cannot be directly applied to the scenario of coordinated image sampling.
The only exception was reported by \cite{Diaz:2017}, 
where the authors investigate coordination between a human and multiple robots running coverage control. 
However, the solution is restricted to 2D coverage control and is not compatible with the present scenario.
%Cooperative human-robotic-swarm systems have been explored for coverage control, 
%where a human designates important locations for the swarm to cover (\cite{Diaz:2017}),
%but this research is limited to 2D and lacks stability analysis of the 
%human-in-the-loop system.
%While passivity-based stability guarantees exist for human guidance of multiple robot systems
%(\cite{Hokayem:2006, Nuno:2011}), these often focus on consensus-based motion where the human 
%controls the average position (\cite{Atman:2019, Hatanaka:2024}). 
%The stability of combining human interaction with coverage control remains an open question.
%A direct summation of the human's navigation input and the coverage control input is problematic,
%as these inputs may conflict and compromise system stability.
%Therefore, a method is needed to prevent the coverage control input from affecting the average state of the drones.

% In manipulator control, methods that exploit redundant degrees of freedom 
% have been proposed to perform a given task while preserving certain states
% (\cite{Ott:2008}, \cite{Music:2017}).
Redundancy has been leveraged in robotics to support multi-objective behaviors.
Prior work demonstrated compliant nullspace control in manipulation (\cite{Ott:2008}), 
subtask decoupling in human-robot teleoperation (\cite{Music:2017}), 
and behavioral coordination in autonomous robots (\cite{Antonelli:2008}).
Inspired by these redundancy-based frameworks, this paper proposes a semi-autonomous cooperative coverage
control, called stealthy coverage control.
In the stealthy coverage control, the 3D model is reconstructed from the sampled images in real time,
and visually fed back to the human operator.
The human operator is then responsible for evaluating the
quality of the reconstructed 3D model and navigating multiple
drones, while efficient image sampling is performed by coverage control. %
Leveraging redundancy allows the system to ensure stability while maintaining intuitive operability for the human operator, 
enabling cooperative image sampling without compromising control performance.
% This provides both system stability and intuitive operability for the human operator, 
% enabling cooperative image sampling without compromising control performance.
% The sampled images are processed in real time
%  by NeuralRecon (\cite{Sun:2021}) algorithm,
% and reconstructed 3D model is visually fed back to human operator. % %
We finally demonstrate the effectiveness of the control strategy through human-in-the-loop simulation.

%
%%============================================================================%%
\section{Scenario and System Architecture}
\label{chp:model}
%%============================================================================%%
\begin{figure}
    \begin{center}
        \includegraphics[width=6.8cm]{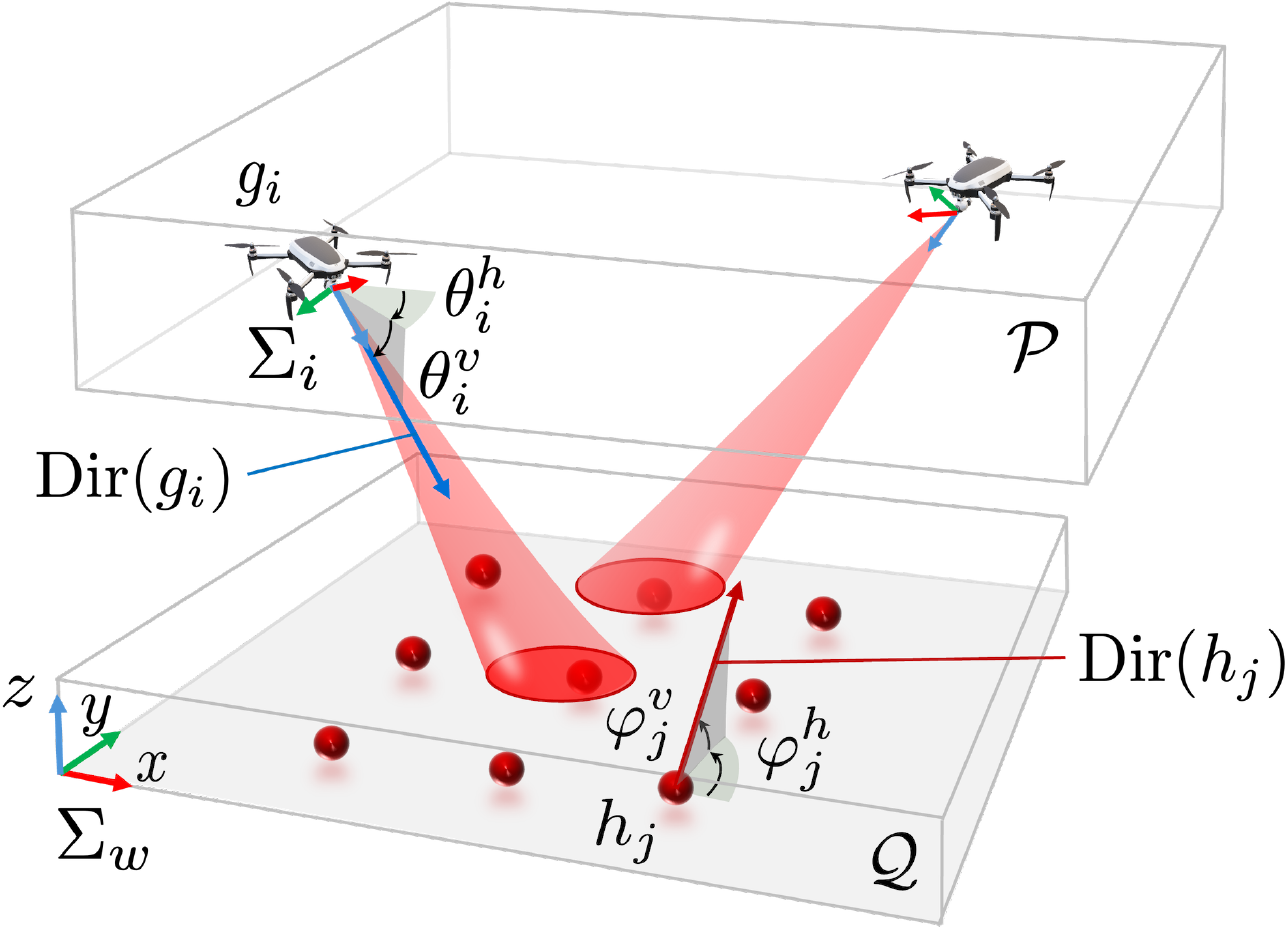}
        \caption{Illustration of the target scenario, where multiple drones sample aerial
        images of the set $\mathcal{Q}$ to reconstruct a 3D model of a scene.}
        \label{fig:scenario}
    \end{center}
\end{figure}
%%============================================================================%%
\subsection{Environment and Drone Dynamics}
%%============================================================================%%
Let us consider a cooperative image sampling task for 3D reconstruction, performed by a team of a human operator
and $n$ drones, within a bounded Euclidean space as illustrated in Fig.~\ref{fig:scenario}. %
We introduce two right-handed coordinate
systems: a stationary world frame $\Sigma_w$ and a body frame $\Sigma_i$
for each drone $i\in\mathcal{I}=\{1, 2, \ldots, n\}$. %
The world frame $\Sigma_w$ is defined such that
its $z$-axis points upward, opposite to gravity,
representing the drone flight altitude.
The body frame $\Sigma_i$ is defined so that its $z$-axis aligns with the camera's optical axis. %

The drones are assumed to operate within a compact subset $\mathcal{P}\subset\mathbb{R}^{3}$. %
The position of each drone, defined as the origin of $\Sigma_i$, in the world frame $\Sigma_w$ is denoted by
$p_i=[x_{i}\ y_{i}\ z_{i}]^{\top}\in\mathcal{P}$. %
% Each drone is equipped with a gimbal-mounted camera.
% Each drone is equipped with a camera, allowing control of its
% yaw angle $\theta^h_i\in\Theta^h=[-\pi,\pi]$ and pitch angle $\theta^v_i\in\Theta^v=(0,\pi/2]$. %
The yaw angle $\theta^h_i\in\Theta^h=[-\pi,\pi]$ is controlled through the drone's body rotation,
while the camera's pitch angle $\theta^v_i\in\Theta^v=(0,\pi/2]$ is adjusted via a gimbal. %
The state of each drone $i$ is then defined by the collection of $p_i$ and $\theta_i=[\theta^h_i\ \theta^v_i]^\top$ 
as $g_i=[p_i^\top\ \theta_i^\top]^{\top}$.
% We model the dynamics of each drone as a single-integrator system, where the state
% $p_i$ is directly controlled by its velocity input. %
The motion of each drone $i$ is assumed to be governed by the following dynamics:
\begin{align}
    \label{eq:kinematics} %
    \dot{g}_{i}=\begin{bmatrix}
        \dot{p}_i \\[1pt] \dot{\theta}_i
    \end{bmatrix}=
    \begin{bmatrix}
        {u_i^p} \\[1pt] {u_i^\theta}
    \end{bmatrix}=:u_{i},
\end{align}
where $u_{i}\in\mathcal{U}\subseteq\mathbb{R}^5$ is the collection of the
translational and angular velocity inputs to be designed,
and $\mathcal{U}$ is the admissible input set. %

%Hereafter, for any input $u_*$, a superscript $p$ denotes the translational part, while a superscript $\theta$ denotes the angular part.

High-quality 3D model is known to be obtained by observing points in the target set $\mathcal{Q}\subset\mathbb{R}^{3}$
from rich viewing angles.
To formalize this requirement, we define a 5D space
$\mathcal{H}_c=\mathcal{Q}\times\Phi^h\times\Phi^v$, %
where the sets $\Phi^h$ and $\Phi^v$ characterize the viewing angles from which the drones should sample images.
The set $\mathcal{H}_c$ is then discretized into $m$ equally sized cells. %
Let $\mathcal{M}= \{1, 2, \dots, m\}$ denote the set of cell indices, and
$\mathcal{H}$ denote the set of the five quantities associated with all cells. %
The quantities for $j \in \mathcal{M}$ are represented as $h_{j}=[q_j^\top\ \varphi_j^\top]^{\top}\in \mathcal{H}$,
where $q_j\in\mathcal{Q}$ denotes the position of the center of the cell $j$ in $\Sigma_w$,
and $\varphi_j=[\varphi^h_j\ \varphi^v_j]^\top\in\Phi^h\times\Phi^v$ denotes the target horizontal and vertical angles (Fig.~\ref{fig:scenario}).
% $\varphi^h_j\in\Theta^h$ and $\varphi^v_j\in\Theta^v$. %

%%============================================================================%%
\begin{figure}
    \begin{center}
        \includegraphics[width=8.8cm]{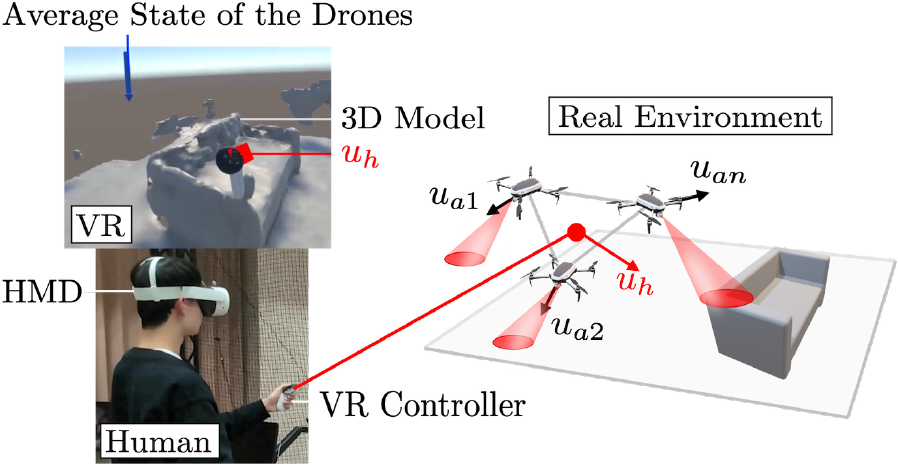}
        % \vspace{0.01mm}
        \caption{Illustration of the intended human-robots interactions.
        The human wears a head-mounted display (HMD),
        visually perceives the reconstructed 3D model and the average state of the drones in the virtual space, 
        and determines velocity commands to control the virtual average drone.}
        \label{fig:system_overview}
    \end{center}
\end{figure}

To make directional comparisons straightforward, 
we define a mapping $\mathrm{Dir}(\cdot)$ as follows:
% let $r_i\in\mathbb{R}^3$ denote the camera's viewing direction
% and $s_j\in\mathbb{R}^3$ denote the target direction of each cell:
\begin{align*}
    \mathrm{Dir}(g_i)  & = [\cos\theta^{v}_{i}\cos\theta^{h}_{i}\ \cos\theta^{v}_{i}\sin\theta^{h}_{i}\ \sin\theta^{v}_{i}]^{\top},      \\
    \mathrm{Dir}(h_j) & = [\cos\varphi^{v}_{j}\cos\varphi^{h}_{j}\ \cos\varphi^{v}_{j}\sin\varphi^{h}_{j}\ \sin\varphi^{v}_{j}]^{\top},
\end{align*}
where $\mathrm{Dir}(g_i)$ and $\mathrm{Dir}(h_j)$ represent the direction vectors
corresponding to the camera orientation of drone $i$ and the target viewing direction of cell $j$, respectively (Fig.~\ref{fig:scenario}).
% \begin{itemize}
%     \item $\mathrm{Pos}(\cdot)$: Extracts the 3D position.
%           \begin{align*}
%               \mathrm{Pos}(p_i) = [x_i\ y_i\ z_i]^{\top},\ \mathrm{Pos}(q_j) = [x_j\ y_j\ z_j]^{\top}.
%           \end{align*}
%     \item $\mathrm{Ang}(\cdot)$: Extracts the angles.
%           \begin{align*}
%               \mathrm{Ang}(p_i) = [\theta^{h}_i\ \theta^{v}_i]^{\top},\ \mathrm{Ang}(q_j) = [\theta^{h}_j\ \theta^{v}_j]^{\top}.
%           \end{align*}
%     \item $\mathrm{Dir}(\cdot)$: Converts angles to a direction vector.
%           \begin{align*}
%               \mathrm{Dir}(p_i) & = [\cos\theta^{v}_i\cos\theta^{h}_i\ \cos\theta^{v}_i\sin\theta^{h}_i\ \sin\theta^{v}_i]^{\top}, \\
%               \mathrm{Dir}(q_j) & = [\cos\theta^{v}_j\cos\theta^{h}_j\ \cos\theta^{v}_j\sin\theta^{h}_j\ \sin\theta^{v}_j]^{\top}.
%           \end{align*}
% \end{itemize}
%%============================================================================%%
\subsection{Human-enabled System Architecture}

We suppose that the drones transmit image data to a central computer, 
where the 3D structure is reconstructed in real time through a subroutine such as NeuralRecon (\cite{Sun:2021}, \cite{Hanif:2025}).
Now, the number of images required for accurate reconstruction depends on 
complexity of the target structure, which varies depending on the location within the environment. 
It is however challenging to know such spatially heterogeneous structural complexity in advance. 
Furthermore, in the absence of ground-truth structural information, 
automatically identifying regions that require additional image data 
is not always straightforward. 
In this paper, we leave the identification of regions with imperfect reconstruction quality 
and navigation of the drones to a human operator.
An overview of the system is illustrated in Fig.~\ref{fig:system_overview}. %

In the present semi-autonomous system, both a human operator and autonomous control concurrently 
contribute to the operation of the drones.
The human operator visually perceives the reconstructed model and the average state of the drones\footnote{
The architecture that the operator feeds back the average information of the drones is inspired by \cite{Atman:2019}, \cite{Hatanaka:2024}, where stability of a human-in-the-loop system was rigorously proved 
under the assumption of human passivity or passivity shortage in the translational control. 
The stability analysis can be extended to the dynamics including camera orientations, 
but this issue is beyond the scope of this specific paper and will be presented in a separate paper.}, 
and then provides a velocity command $u_{h}\in \mathbb{R}^{5}$. %
% then provide a velocity command $u_{h}=[{u_h^p}^\top\ {u_h^\theta}^\top]^\top\in \mathbb{R}^{5}$. %
The autonomous controller for each drone also computes an autonomous input
$u_{ai}\in \mathbb{R}^{5}$ for efficiently sampling images.
% $u_{ai}=[{u_{ai}^p}^\top\ {u_{ai}^\theta}^\top]^\top\in \mathbb{R}^{5}$ to realize efficient image sampling.
The velocity input $u_i$ for drone $i$ is given as the sum of the human command $u_h$
and the autonomous control input $u_{ai}$ as follows:
\begin{align}
    \label{eq:actual_control_input} %
    u_{i}=u_{h}+u_{ai}.
\end{align}
%To facilitate the human operator's decision making, the system provides
%visual feedback of the average state of the drones and 3D mesh reconstructed 
%from images captured by the drones.
% The first is the average state of all drones $\bar{g}$, which is 
% computed by averaging the individual states $g_{1}, g_{2}, \dots, g_{n}$.
% The second is a 3D mesh reconstructed in real-time, generated from the drone's
% camera images.
The average state $\bar{g}\in\mathbb{R}^6$
of the drones to be fed back to the operator is defined as follows:
% \begin{align*}
%     \bar{p} & = \begin{bmatrix}
%                     \overline{\mathrm{Pos}(p)}^{\top} & \overline{\mathrm{Ang}(p)}^{\top}
%                 \end{bmatrix}^{\top},
% \end{align*}
% where
% \begin{align}
%     \overline{\mathrm{Pos}(p)} & =\dfrac{1}{n}\sum_{i=1}^{n}\mathrm{Pos}(p_{i}),\label{eq:definition-Pos}                                  \\
%     \overline{\mathrm{Dir}(p)} & =\dfrac{\sum_{i=1}^{n} \mathrm{Dir}(p_i)}{\|\sum_{i=1}^{n} \mathrm{Dir}(p_i)\|}.\label{eq:definition-Dir}
% \end{align}
\begin{align*}
    \bar{g} & = \begin{bmatrix}
                    \bar{p}^{\top} & \overline{\mathrm{Dir}(g)}^{\top}
                \end{bmatrix}^{\top},
\end{align*}
where
\begin{align*}
    \bar{p}  =\dfrac{1}{n}\sum_{i\in\mathcal{I}}p_{i},\quad
    \overline{\mathrm{Dir}(g)}  =\dfrac{\sum_{i\in\mathcal{I}} \mathrm{Dir}(g_i)}{\|\sum_{i\in\mathcal{I}} \mathrm{Dir}(g_i)\|},
    % \bar{r}  =\dfrac{\sum_{i=1}^{n} r_i}{\|\sum_{i=1}^{n} r_i\|}.
\end{align*}
and $g\in\mathbb{R}^{5n} $ is the collective state of all drones.

In this paper, we address the following two issues:
\begin{enumerate}[label=\roman*)]
    \item How to reflect human commands to autonomous image sampling control.
    \item How to avoid conflicts between manual control $u_h$ and autonomous control $u_{ai}$ in order to preserve human operability in the shared control system.
\end{enumerate}
To address these issues, we take a two-step approach.
In the next section, we first present a novel coverage control strategy that incorporates human interventions.
Subsequently, we present a controller that resolves the potential conflicts between manual control and autonomous control. %
%%============================================================================%%
\section{Coverage Control}
\label{chp:converage_control}
%%============================================================================%%
% \subsection{Sensing Performance}
%%============================================================================%%
\begin{figure}
\centering
        \includegraphics[width=5cm]{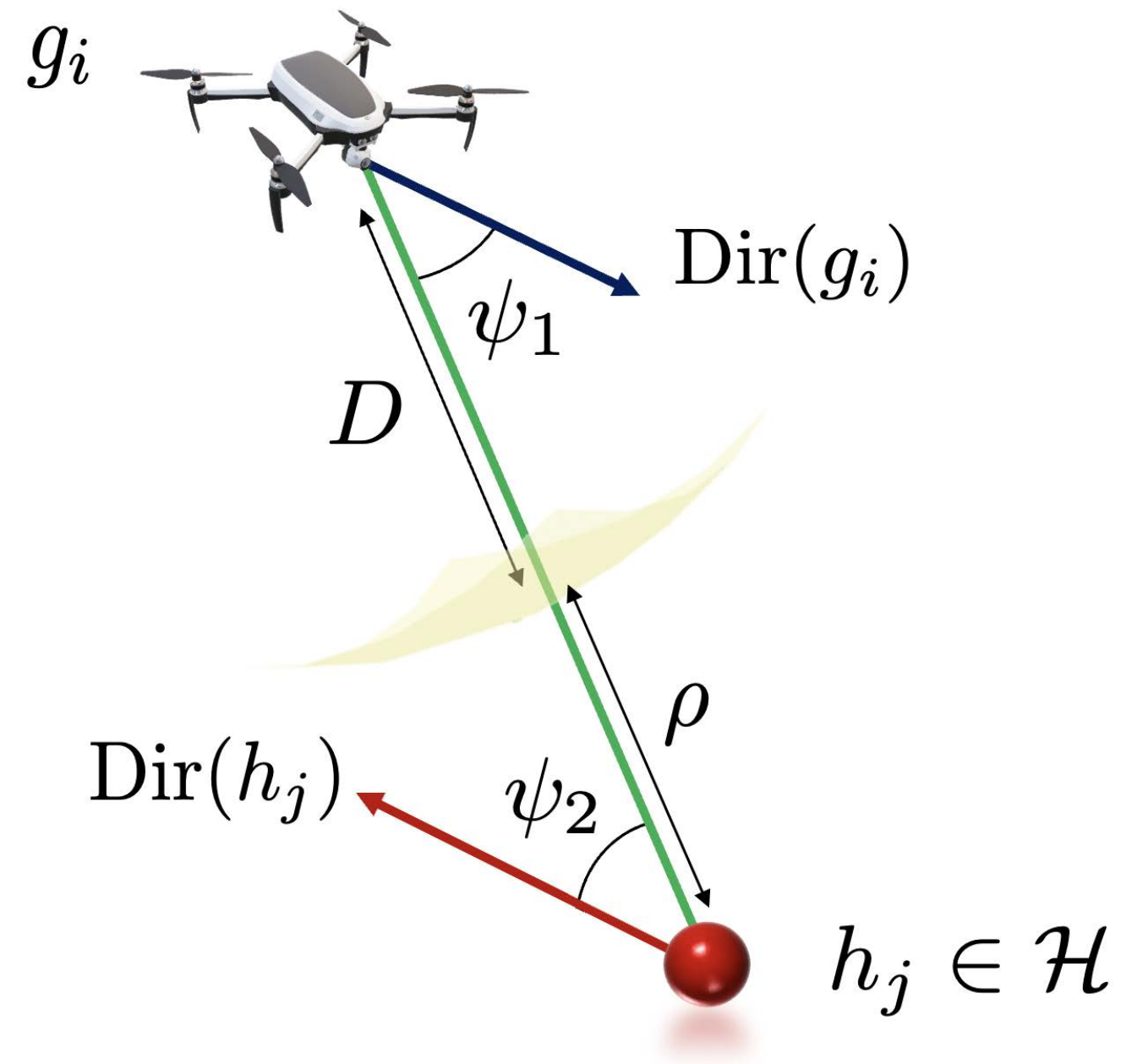}
        \caption{Geometric relation between a drone with state $g_{i}$ and an observation point
        $h_j$.}
        \label{fig:sample_point}
\end{figure}
%To evaluate the seising performance of a drone camera $i$ with state $g_{i}$ associated with the observation point $h_j\in\mathcal{H}$, %

In this section, we present a coverage control strategy that incorporates the human interventions.

Let us first define a sensing performance function that evaluates the quality of the acquired sensing data about each observation point $h_j$ acquired by drone $i$ with state $g_i$. %
%This function is based on the relative position and orientation between the drone and the observation point,
%as illustrated in Fig.~\ref{fig:sample_point}. %
%We begin by reviewing the formulation of the sensing performance function 
%presented in 
To this end, we begin by reviewing the formulation in \cite{Hanif:2025}.
Define $\psi_{1}$, $\psi_{2}$, and $\rho$ as follows:
% \begin{align*}
%     \label{eq:weight_function} %
%     \psi_{1}(p_{i}, q_j) & =\arccos{\bigg(\mathrm{Dir}(p_{i})\cdot\dfrac{\mathrm{Pos}(q_j)-\mathrm{Pos}(p_i)}{\|\mathrm{Pos}(q_j)-\mathrm{Pos}(p_i)\|}\bigg)}, \\
%     \psi_{2}(p_{i}, q_j) & =\arccos{\bigg(-\mathrm{Dir}(q_j)\cdot\dfrac{\mathrm{Pos}(q_j)-\mathrm{Pos}(p_i)}{\|\mathrm{Pos}(q_j)-\mathrm{Pos}(p_i)\|}\bigg)},  \\
%     \rho(p_{i},q_j)         & = \|\mathrm{Pos}(q_j)-\mathrm{Pos}(p_{i})\|-D,
% \end{align*}
\begin{align*}
    \label{eq:weight_function} %
    \psi_{1}(g_{i}, h_j) & =\arccos{\bigg(\mathrm{Dir}(g_i)\cdot\dfrac{q_j-p_i}{\|q_j-p_i\|}\bigg)}, \\
    \psi_{2}(g_{i}, h_j) & =\arccos{\bigg(-\mathrm{Dir}(h_j)\cdot\dfrac{q_j-p_i}{\|q_j-p_i\|}\bigg)},  \\
    \rho(g_{i},h_j)         & = \|q_j-p_i\|-D,
\end{align*}
where $D\in\mathbb{R}$ is an appropriate distance providing sufficient image resolution.
The physical meaning of these quantities is illustrated in Fig.~\ref{fig:sample_point},
and see \cite{Hanif:2025} for more details.
% and lies within the depth of field. %
Using these quantities, we define the sensing performance function $f_1(g_{i}, h_j) \in [0,1]$ as:
\begin{align}
     & f_1(g_{i},h_j) =\notag                                                                                                            \\
     & \exp\bigg( \frac{-(1-\cos \psi_1)^2}{2\sigma_1^2}+\frac{-(1-\cos \psi_2)^2}{2\sigma_2^2}+\frac{-\rho^2}{2\sigma_3^2}\bigg),
\end{align}
where a value of 1 signifies an optimal observation, and 0 indicates no observation. 
The parameters $\sigma_1$, $\sigma_2$, and $\sigma_3\in\mathbb{R}$ are constants that adjust
the sensitivity to each geometric component of the performance measure.

\cite{Hanif:2025} assign 
an importance index $\phi_j\in[0,1]$ to each observation point $q_j$, and 
updates the indices so that its decay rate follows the value of the performance function $f_1(g_{i},h_j)$. 
Accordingly, the drones are directed to unobserved regions from well-observed ones.
However, this procedure does not achieve guidance to areas where more image data should be sampled, 
as determined by the operator. 
Since the operator controls the virtual average drone, 
we assume that the observation points where the average drone achieves high sensing performances to be those where the human has decided to sample more images.
%does not prioritize areas monitored by the average human-operated drones.
%To reflect the human preference, 
According to this hypothesis, we define another sensing performance function 
$f_2(\bar{g}, h_j) \in [0,1]$ that evaluates 
the sensing performance of $h_j$ by the virtual drone having the average state $\bar g$ operated by the human as: %
\begin{align}
    \bar\psi_{1}(\bar{g}, h_j)     & =\arccos{\bigg(\overline{\mathrm{Dir}(g)}\cdot\dfrac{q_j-\bar{p}}{\|q_j-\bar{p}\|}\bigg)},\notag \\
    \bar\psi_{2}(\bar{g}, h_j)     & =\arccos{\bigg(-\mathrm{Dir}(h_j)\cdot\dfrac{q_j-\bar{p}}{\|q_j-\bar{p}\|}\bigg)},\notag\\
    f_2(\bar{g}, h_j)&=\exp\bigg( \frac{-(1-\cos \bar\psi_1)^2}{2\bar\sigma_1^2}+\frac{-(1-\cos \bar\psi_2)^2}{2\bar\sigma_2^2}\bigg),
\end{align}
where $\bar\sigma_1$ and $\bar\sigma_2\in\mathbb{R}$ are constants defined, similarly to 
$\sigma_1$ and $\sigma_2$.
% \begin{align}
%     \bar\psi_{1}(p, q_j)     & =\arccos{\bigg(\overline{\mathrm{Dir}(p)}\cdot\dfrac{\mathrm{Pos}(q_j)-\overline{\mathrm{Pos}(p)}}{\|\mathrm{Pos}(q_j)-\overline{\mathrm{Pos}(p)}\|}\bigg)},\notag \\
%     \bar\psi_{2}(p, q_j)     & =\arccos{\bigg(-\mathrm{Dir}(q_j)\cdot\dfrac{\mathrm{Pos}(q_j)-\overline{\mathrm{Pos}(p)}}{\|\mathrm{Pos}(q_j)-\overline{\mathrm{Pos}(p)}\|}\bigg)},\notag\\
%     h_2(p, q_j)&=\exp\bigg( \frac{-(1-\cos \bar\psi_1)^2}{2\bar\sigma_1^2}+\frac{-(1-\cos \bar\psi_2)^2}{2\bar\sigma_2^2}\bigg).
% \end{align}

Based on the functions $f_1(g_{i},h_j)$ and $f_2(\bar{g}, h_j)$, we update the importance indices by
\begin{align}
    \dot{\phi}_j=\delta k_{\max}(\bar{g},g_i,h_j)^2\Big(\dfrac{k_{\max}(\bar{g},g_i,h_j)+1}{2}-\phi_{j}\Big),
    \label{eq:local_importance_index} %
\end{align}
where
\begin{align}
    k(\bar{g},g_i,h_j)&=f_2(\bar{g},h_j)-f_1(g_i,h_j),\notag\\
    k_{\max}(\bar{g},g_i,h_j)&=\max_{i\in\mathcal{I}}\ k(\bar{g},g_i,h_j),\notag
\end{align}
% \begin{align}
%     \label{eq:local_importance_index} %
%     k(p_i,q_j)&=h_2(p,q_j)-h_1(p_i,q_j)\notag\\
%     k_{\max}(p_i,q_j)&=\max_{i\in\mathcal{I}}\ k(p_i,q_j)\notag\\
%     \dot{\phi}_j&=\delta k_{\max}(p_i,q_j)^2\Big(\dfrac{k_{\max}(p_i,q_j)+1}{2}-\phi_{j}\Big).
% \end{align}
and $\delta\in\mathbb{R}$ is a positive gain parameter. 
%that specifies the sensitivity of $\phi_j$ to the sensing performance.
The importance of $h_j$ increases when it is well observed by the human-operated virtual drone, 
and decreases once it has been observed with sufficient accuracy.
%%============================================================================%%
% \subsection{Objective Function}
%%============================================================================%%

%The drones are required to maximize the average sensing performance across all cells.
%In other words, t
The objective function to be minimized by the drones is formulated as follows:
\begin{align*}
    J=\sum_{j\in\mathcal{M}}\dfrac{\phi_{j}}{m},
\end{align*}
where $m$ is the cardinality of the index set $\mathcal{M}$.
Let us next partition the set of cell indices $\mathcal{M}$.
% We define the set $\mathcal{V}_{i}(g)$ for drone $i$
% as the set of indices for which that drone achieves the highest sensing performance.
% This creates a Voronoi-like partition of the set of observation points, where each cell is 
% assigned to the drone best suited to observe it, as defined by the following equation:
Define the set $\mathcal{V}_{i}(g)$ as %which is a Voronoi-like partition of the set $\mathcal{M}$ as
\begin{align*}
    \mathcal{V}_{i}(g)=\left\{j\subset\mathcal{M}\mid f_1(g_{i},h_{j})\geq f_1(g_{k},h_{j}),\ \forall{k}\subset\mathcal{I}\right\}.
\end{align*}
Since the sets $\mathcal{V}_{i}(g)$ are defined to be mutually exclusive,
the global objective function $J$ can be expressed as a sum of the local objective
functions $J_{i}$ for each drone as:
\begin{align*}
    % \label{eq:objective_function}
    J=\sum_{i\in\mathcal{I}}J_{i},\quad J_i=\sum_{j\in\mathcal{V}_i(g)}\dfrac{\phi_{j}}{m}.
\end{align*}
%%============================================================================%%
% \subsection{QP-based Controller Design}
%%============================================================================%%

Let us now consider the case where the human command is $u_{h}=0$, 
and hence the velocity input $u_i$
is determined solely by the coverage control input $u_{ci}\in \mathbb{R}^{5}$
% is determined solely by the coverage control input $u_{ci}=[{u_{ci}^p}^\top\ {u_{ci}^\theta}^\top]^\top\in \mathbb{R}^{5}$
as the autonomous control input $u_{ai}$. %
The drone dynamics are then given by $\dot g_{i}=u_{ci}$.
Similarly to \cite{Hanif:2025}, we enforce the drones to meet the inequality constraint 
$\dot J \leq -\gamma$, where $\gamma > 0$ specifies the minimal decay rate of the function $J$. 
%The control objective is %for each drone to find a minimal-norm input $u_{ci}$ that
%collectively 
%to ensure that the global objective function $J$ decreases at a rate of at least $\gamma$.
%To enforce this performance constraint, we utilize the constraint-based control approach.
% as in (\cite{Hanif:2025}).
To this end, we define $b_{i,I}=I_i-|\mathcal{V}_i(g)|\gamma/m$ 
where $I_i = -\dot J_i$.
If $b_{i,I}\geq 0$ is satisfied by every drone $i$, the global objective $\dot J \leq -\gamma$ holds.
%corresponding to $\dot{J}\leq-\gamma$,
%as the constraint 
Based on the notion of so-called constraint-based control (\cite{Magnus:2021}), we design the control input that meets $b_{i,I}\geq 0$
%The control input $u_{ci}^{*}$ for each drone $i$ is determined 
by solving the following 
quadratic program:
% Each drone $i$ selects $u_{ci}\in\mathcal{U}$ to minimize $J$, while maintaining $\dot{J}\leq-\gamma$,
% where $\gamma - \dot{J}$, explicitly determined by $h_1(p_i,q_j)$, $h_2(p,q_j)$, and $\phi_j$,
% can be treated as a control barrier function as shown in (\cite{Hanif:2025}).
\begin{align}
    u_{ci}^{*}   & =\underset{u_{ci}}{\text{arg min}}\|u_{ci}\|^{2}\notag                                                                             \\
    \text{s.t.}\ &\sum_{j\in\mathcal{V}_i(g)}\bigg(\dfrac{\partial b_{i,I}}{\partial{\phi_j}}\bigg)\dot\phi_j
    +\bigg(\dfrac{\partial b_{i,I}}{\partial g_i}\bigg)^{\top}u_{ci}+\alpha(b_{i,I})\geq 0,
    \label{eq:coverage_control}
\end{align}
where $\alpha(\cdot)$ is a locally Lipschiz extended class-$\mathcal{K}$ function (\cite{Ames:2017}).
 %
%%============================================================================%%
\section{Stealthy Coverage Control}
\label{chp:stealt_control}
%%============================================================================%%
\begin{figure*}[t]
    \begin{center}
        \includegraphics[width=12cm]{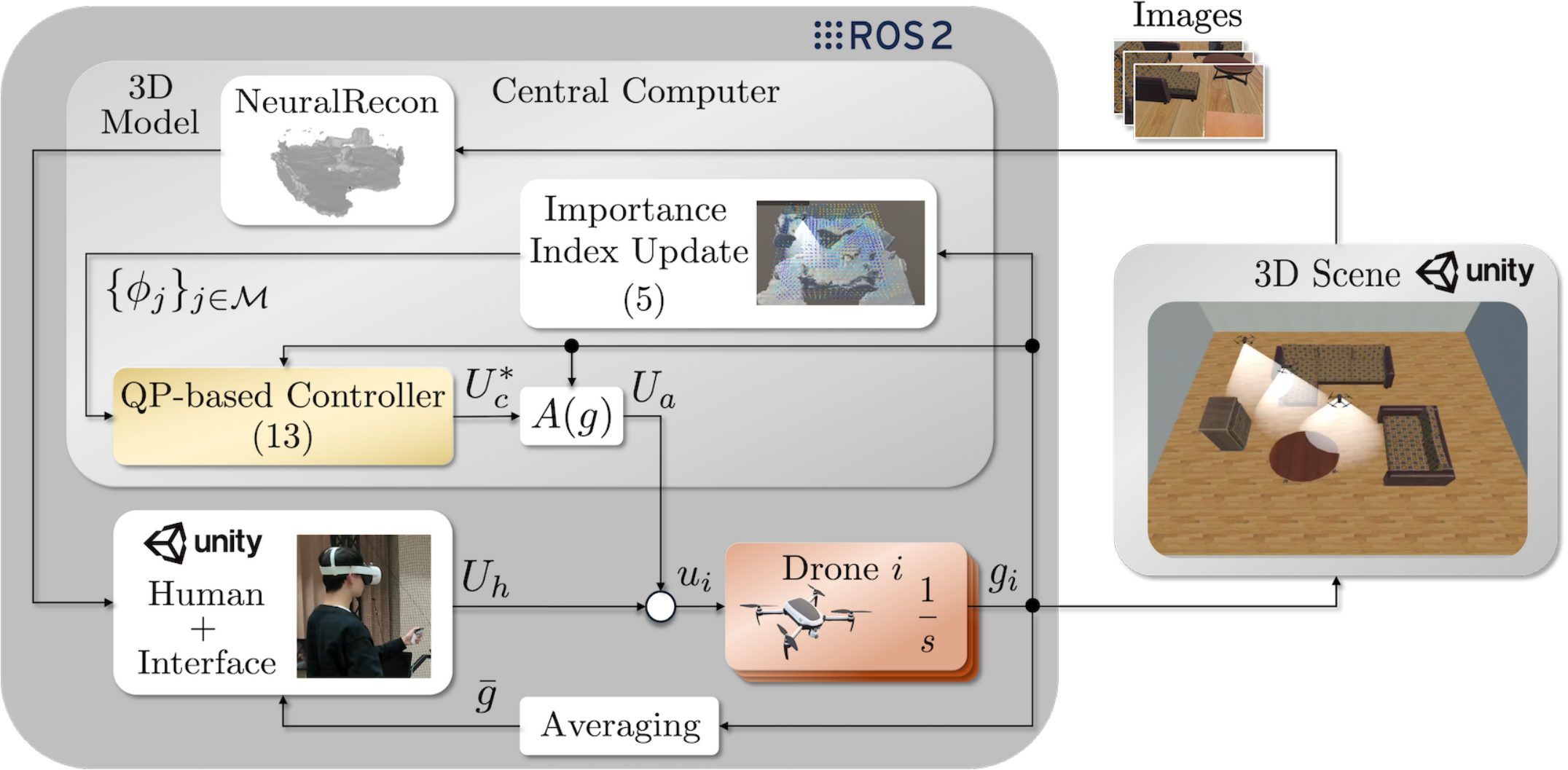}
        \caption{System architecture of the present semi-autonomous image sampling control.
        The left block shows stealthy coverage control, real-time structure reconstruction through NeuralRecon, and human intervention, all implemented on ROS2. 
        The right block shows the virtual 3D scene built on Unity.}
        \label{fig:block_diagram}
    \end{center}
\end{figure*}

Simply combining the coverage input $u_{ci}$ with the human command $u_{h}$ as in (\ref{eq:actual_control_input}) can lead to
conflicts. Specifically, coverage control input $u_{ci}$ affects the average pose of
the drones to be operated by the human, and hence could disturb the manual
operations of the drones, which may degrade the system stability and the human operability.
To address this issue, we present a stealthy control strategy that decouples the
two inputs based on the notion of the redundancy-based control framework using the nullspace.
% The objective of stealthy control is that the drone motions driven
% by the coverage control are invisible from the human operator.
%%============================================================================%%
% \subsection{Definition of Stealthy Control}
%%============================================================================%%

To formulate the stealthy control, we begin by describing the collective system of $n$ drones in
a compact vectorized form. Define the concatenated state vector as $G\in \mathbb{R}^{5n}$, 
the human input vector as $U_{h}\in \mathbb{R}^{5n}$, and
the autonomous input vector as $U_{a}\in \mathbb{R}^{5n}$ as:
\begin{align*}
    G = \begin{bmatrix}p\\[3pt] \theta\end{bmatrix},\quad 
    U_{h}= \begin{bmatrix}U_{h}^{p}\\[3pt] U_{h}^{\theta}\end{bmatrix},\quad
    U_{a}= \begin{bmatrix}U_{a}^{p}\\[3pt] U_{a}^{\theta}\end{bmatrix},
\end{align*}
where
\begin{align*}
    % P_{\text{Pos}}  & = \begin{bmatrix}p_{1}^{\top}&p_{2}^{\top}&\cdots&p_{n}^{\top}\end{bmatrix}^{\top},
    % P_{\text{Ang}}  = \begin{bmatrix}\theta_{1}^{\top}&\theta_{2}^{\top}&\cdots&\theta_{n}^{\top}\end{bmatrix}^{\top},    \\ %
    p      & = \begin{bmatrix}{p_1}^{\top}&{p_2}^{\top}&\cdots&{p_n}^{\top}\end{bmatrix}^{\top}\in\mathbb{R}^{3n},                \\ %
    \theta      & = \begin{bmatrix}{\theta_1}^{\top}&{\theta_2}^{\top}&\cdots&{\theta_n}^{\top}\end{bmatrix}^{\top}\in\mathbb{R}^{2n},                \\ %
    U_{h}^{p}      & = \begin{bmatrix}{u_{h}^p}^{\top}&{u_{h}^p}^{\top}&\cdots&{u_{h}^p}^{\top}\end{bmatrix}^{\top}\in\mathbb{R}^{3n},                   \\ %
    U_{h}^{\theta} & = \begin{bmatrix}{u_{h}^\theta}^{\top}&{u_{h}^\theta}^{\top}&\cdots&{u_{h}^\theta}^{\top}\end{bmatrix}^{\top}\in\mathbb{R}^{2n}, \\ %
    U_{a}^{p}      & = \begin{bmatrix}{u_{a1}^p}^{\top}&{u_{a2}^p}^{\top}&\cdots&{u_{an}^p}^{\top}\end{bmatrix}^{\top}\in\mathbb{R}^{3n},                \\ %
    U_{a}^{\theta} & = \begin{bmatrix}{u_{a1}^\theta}^{\top}&{u_{a2}^\theta}^{\top}&\cdots&{u_{an}^\theta}^{\top}\end{bmatrix}^{\top}\in\mathbb{R}^{2n}.   
    % U_{a}^p = \begin{bmatrix}{u_{a1}^p}\\{u_{a2}^p}\\\vdots\\{u_{an}^p}\end{bmatrix},%
    % U_{a}^\theta = \begin{bmatrix}{u_{a1}^\theta}\\{u_{a1}^\theta}\\\vdots\\{u_{a1}^\theta}\end{bmatrix},%
    % U_{h}^p = \begin{bmatrix}{u_{h}^p}\\{u_{h}^p}\\\vdots\\{u_{h}^p}\end{bmatrix},%
    % U_{h}^\theta = \begin{bmatrix}{u_{h}^\theta}\\{u_{h}^\theta}\\\vdots\\{u_{h}^\theta}\end{bmatrix}.
\end{align*}
% \begin{align*}
%     P = \begin{bmatrix}P_{\text{Pos}}\\ P_{\text{Ang}}\end{bmatrix},\  U_{a}= \begin{bmatrix}U_{a\text{Pos}}\\ U_{a\text{Ang}}\end{bmatrix},\  U_{h}= \begin{bmatrix}U_{h\text{Pos}}\\ U_{h\text{Ang}}\end{bmatrix},\
% \end{align*}
% where
% \begin{align*}
%     P_{\text{Pos}}  & = \begin{bmatrix}\text{Pos}(p_{1})^{\top}&\text{Pos}(p_{2})^{\top}&\cdots&\text{Pos}(p_{n})^{\top}\end{bmatrix}^{\top},    \\ %
%     P_{\text{Ang}}  & = \begin{bmatrix}\text{Ang}(p_{1})^{\top}&\text{Ang}(p_{2})^{\top}&\cdots&\text{Ang}(p_{n})^{\top}\end{bmatrix}^{\top},    \\ %
%     U_{a\text{Pos}} & = \begin{bmatrix}\text{Pos}(u_{a1})^{\top}&\text{Pos}(u_{a2})^{\top}&\cdots&\text{Pos}(u_{an})^{\top}\end{bmatrix}^{\top}, \\ %
%     U_{a\text{Ang}} & = \begin{bmatrix}\text{Ang}(u_{a1})^{\top}&\text{Ang}(u_{a2})^{\top}&\cdots&\text{Ang}(u_{an})^{\top}\end{bmatrix}^{\top}, \\ %
%     U_{h\text{Pos}} & = \begin{bmatrix}\text{Pos}(u_{h})^{\top}&\text{Pos}(u_{h})^{\top}&\cdots&\text{Pos}(u_{h})^{\top}\end{bmatrix}^{\top},    \\ %
%     U_{h\text{Ang}} & = \begin{bmatrix}\text{Ang}(u_{h})^{\top}&\text{Ang}(u_{h})^{\top}&\cdots&\text{Ang}(u_{h})^{\top}\end{bmatrix}^{\top}.
% \end{align*}
% To simplify the subsequent equations, these vectors are partitioned into
% positional and angular components.
The collective dynamics of the full $5n$-dimensional system can then be written in
the following form:
\begin{align}
    \label{eq:stealth_control_dynamics} %
    \dot G=U_{h}+U_{a}.
\end{align}
Let us now define the stealthy control as follows.
\begin{definition}
    Consider a dynamical system %$G_{P_0}$ 
    having the state equation (\ref{eq:stealth_control_dynamics}),
    the control output $\bar{g}$, the control input $U_{a}$, and the external
    input $U_{h}$. Then, the input signal $U_{a}(\cdot)$ is said to be a
    stealthy control if
    $\bar g(t; U_{a}(\cdot), U_{h}(\cdot), G_{0}) = \bar g(t; 0, U_{h}(\cdot), G_{0}
    )$
    holds for all time $t$, all initial states $G_{0}$, and external input signal
    $U_{h}(\cdot)$, where $\bar g(t; U_{a}(\cdot), U_{h}(\cdot), G_{0})$ denotes
    the output at time $t$ from the initial state $G_{0}$ when the input signals
    $U_{a}(\cdot), U_{h}(\cdot)$ are applied to the system.
    \label{def:stealth_control}
\end{definition}
Suppose that the stealthy control is applied to the system. Then, %the actual drone $i$ follows the dynamics (\ref{eq:actual_control_input}),
%whereas 
the average state $\bar{g}$ perceived by the operator is governed by the
simpler dynamics $\dot{\bar{g}}= u_{h}$.
In other words, the motion of the drones governed by coverage control is imperceptible to the human operator.
%%============================================================================%%
% \subsection{Stealthy Coverage Control Design}
%%============================================================================%%
% Note that the averaging processes for
% drone camera positions and orientations differ, as described in (\ref{eq:definition-Pos})
% and (\ref{eq:definition-Dir}). %
% This inherent difference requires the $5n$D system to be divided into
% translational and rotational motions, respectively.

In order to design the stealthy control, we multiply a
matrix $A(g)\in\mathbb{R}^{5n\times5n}$ to the collection of coverage control
input $u_{ci}$ for all drones, denoted by $U_{c}$, as below:
% so that the autonomous control input $U_{a}$ satisfies below:
\begin{align}
    \label{eq:stealth_control_input}                                                                                                  %
    % U_{a} = A(\bar{g})U_{c}:= \begin{bmatrix}A_{\text{Pos}}&O \\ O&A_{\text{Ang}}(\bar{g})\end{bmatrix} \begin{bmatrix}U_{c}^p\\ U_{c}^\theta\end{bmatrix},
    U_{a}= A(g)U_{c}:= \begin{bmatrix}A^{p}&O \\ O&A^{\theta}(g)\end{bmatrix} \begin{bmatrix}U_{c}^{p}\\[3pt] U_{c}^{\theta}\end{bmatrix},
\end{align}
where
\begin{align*}
    % U_{c\text{Pos}} & = \begin{bmatrix}\text{Pos}(u_{c1})^{\top}&\text{Pos}(u_{c2})^{\top}&\cdots&\text{Pos}(u_{cn})^{\top}\end{bmatrix}^{\top}, \\ %
    % U_{c\text{Ang}} & = \begin{bmatrix}\text{Ang}(u_{c1})^{\top}&\text{Ang}(u_{c2})^{\top}&\cdots&\text{Ang}(u_{cn})^{\top}\end{bmatrix}^{\top}.
    U_{c}^{p}      & = \begin{bmatrix}{u_{c1}^p}^{\top}&{u_{c2}^p}^{\top}&\cdots&{u_{cn}^p}^{\top}\end{bmatrix}^{\top}\in\mathbb{R}^{3n},                \\ %
    U_{c}^{\theta} & = \begin{bmatrix}{u_{c1}^\theta}^{\top}&{u_{c2}^\theta}^{\top}&\cdots&{u_{cn}^\theta}^{\top}\end{bmatrix}^{\top}\in\mathbb{R}^{2n}. %
\end{align*}
% where $U_c\in\mathbb{R}^{5n}$ is the coverage control input vector.
\begin{theorem}
    Consider the system in Definition~\ref{def:stealth_control}. %
    Denote $i$-th column of $\bm1_{n}\otimes I_{3} \in \mathbb{R}^{3n\times 3}$ and $\left(\dfrac{\partial}{\partial \theta}\overline{\mathrm{Dir}(g)}\right)^\top$ by $e_i^p$ and $e_i^\theta({g})\in \mathbb{R}^{2n\times 3}$, respectively.
    Suppose that the matrix $A(g)$ in (\ref{eq:stealth_control_input}) is designed
    so that
    \begin{align}
        \normalfont e_i^{p}\in\text{ker}(A_p^\top),   \quad e_i^{\theta}(g)\in\text{ker}\Big((A^\theta({g}))^\top\Big)\ \forall i =1,2,3.
\label{eqn:hatanaka_edit}
    \end{align}
    % to satisfy (\ref{eq:stealth_control_dynamics}). %
    Then, the input (\ref{eq:stealth_control_input}) constitutes a
    stealthy control.
\end{theorem}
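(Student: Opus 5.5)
The approach is to show that the autonomous part of the input has no effect on the time derivative of the output $\bar g$. Since both closed-loop trajectories start from the same $G_0$, the two outputs will then coincide for all $t$. Because $\bar g$ depends on $G$ only through $\bar p$ and $\overline{\mathrm{Dir}(g)}$, and these depend respectively on $p$ alone and on $\theta$ alone, I will treat the translational and rotational blocks separately, using the block-diagonal structure of $A(g)$ in (\ref{eq:stealth_control_input}).

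\emph{Translational part.} We have $\bar p=\frac1n(\bm1_n\otimes I_3)^\top p$. Along (\ref{eq:stealth_control_dynamics}),
\[
\dot{\bar p}=\frac1n(\bm1_n\otimes I_3)^\top\big(U_h^p+A^pU_c^p\big).
\]
The rows of $(\bm1_n\otimes I_3)^\top$ are $(e_i^p)^\top$, $i=1,2,3$. By (\ref{eqn:hatanaka_edit}), $(e_i^p)^\top A^p=((A^p)^\top e_i^p)^\top=0$, so the $U_c^p$ term vanishes identically. Moreover $\frac1n(\bm1_n\otimes I_3)^\top U_h^p=u_h^p$. Hence $\dot{\bar p}=u_h^p$ regardless of $U_a$, and $\bar p(t)=\bar p(0)+\int_0^t u_h^p$ for both input choices.

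\emph{Rotational part.} By the chain rule,
\[
\frac{d}{dt}\overline{\mathrm{Dir}(g)}=\frac{\partial\overline{\mathrm{Dir}(g)}}{\partial\theta}\,\dot\theta
=\begin{bmatrix}e_1^\theta & e_2^\theta & e_3^\theta\end{bmatrix}^\top\big(U_h^\theta+A^\theta(g)U_c^\theta\big),
\]
which is valid wherever $\sum_i\mathrm{Dir}(g_i)\neq0$, so that the normalization is smooth. Condition (\ref{eqn:hatanaka_edit}) gives $(e_i^\theta)^\top A^\theta(g)=0$ pointwise in $g$. Therefore
\[
\frac{d}{dt}\overline{\mathrm{Dir}(g)}=\frac{\partial\overline{\mathrm{Dir}(g)}}{\partial\theta}\,U_h^\theta=:F(\theta)\,U_h^\theta.
\]

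\emph{Main obstacle.} This is the delicate step. Unlike the translational case, the right-hand side $F(\theta)U_h^\theta$ is a function of the full $\theta$, not of $\overline{\mathrm{Dir}(g)}$ alone. The two trajectories (with $U_a$ and with $U_a=0$) have different $\theta(t)$, so a naive uniqueness-of-ODE argument on the output alone does not immediately close. I would attempt the following, in order:
\begin{enumerate}[label=(\roman*)]
\item Argue invariance directly. Show that the map $\theta\mapsto\overline{\mathrm{Dir}(g)}$ is constant along the flow of $A^\theta U_c^\theta$, since its differential annihilates that flow direction. Then show that the human flow $U_h^\theta$, which applies the same angular velocity to every drone, commutes with this invariance in the sense that the induced output velocity depends only on the current output value.
\item If that fails in general, interpret the statement as intended in the paper: the stealthy property is equivalent to the output dynamics being independent of $U_a$. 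In that reading, the identity $\dot{\bar g}(t;U_a)=\dot{\bar g}(t;0)$ holds whenever it is evaluated at the same state, and the paper asserts $\dot{\bar g}=u_h$ as the perceived dynamics.
\end{enumerate}
Combining the translational and rotational parts, integration from the common initial condition $G_0$ gives the equality of outputs required by Definition~\ref{def:stealth_control}. The well-posedness caveat on $\sum_i\mathrm{Dir}(g_i)\neq0$ should be noted as a standing assumption.
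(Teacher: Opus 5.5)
Your translational step and your chain-rule step for $\overline{\mathrm{Dir}(g)}$ are exactly the paper's proof. The paper stops at $\frac{d}{dt}\overline{\mathrm{Dir}(g)}=\frac{\partial}{\partial\theta}\overline{\mathrm{Dir}(g)}\,U_h^\theta$ and concludes that $U_a$ ``does not contribute to the evolution of $\bar g$''. In other words, it tacitly adopts your reading (ii): $\frac{\partial\bar g}{\partial G}\,U_a=0$ at every state. The obstacle you isolate is a genuine gap, and the paper's argument has it too. It also cannot be closed in general. The trajectory-level equality of Definition~\ref{def:stealth_control} holds for $\bar p$ but fails for $\overline{\mathrm{Dir}(g)}$ once pitch commands are allowed. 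Incidentally, your well-posedness caveat is automatic: each $\mathrm{Dir}(g_i)$ has third component $\sin\theta_i^v>0$ on $\Theta^v$, so $\sum_i\mathrm{Dir}(g_i)\neq0$.

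Concretely, plan (i) works only in the yaw channel. Write $u_h^\theta=[\omega^h\ \omega^v]^\top$, $S=\sum_{i\in\mathcal{I}}\mathrm{Dir}(g_i)$ and $e_z=[0\ 0\ 1]^\top$.

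\emph{Yaw.} Since $\partial\mathrm{Dir}(g_i)/\partial\theta_i^h=e_z\times\mathrm{Dir}(g_i)$, the yaw command gives $\dot S=\omega^h\,e_z\times S$. Hence $\frac{d}{dt}\overline{\mathrm{Dir}(g)}=\omega^h\,e_z\times\overline{\mathrm{Dir}(g)}$, which is a closed ODE in the output, so your uniqueness argument goes through when $\omega^v\equiv0$.

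\emph{Pitch.} The term $\omega^v\sum_i\partial\mathrm{Dir}(g_i)/\partial\theta_i^v$ is not a function of $S$, and the claim fails. Take $n=2$ with $\theta_1=\theta_2=[0\ \pi/4]^\top$ at $t=0$, and set $u_h=0$ on $[0,T]$. During this interval let $U_a^\theta$ move the cameras along the level set $\theta_{1,2}=[\pm s\ \ \arctan(\cos s)]^\top$, $0\le s\le\pi/3$. This is admissible, for example, with $A^\theta$ the orthogonal projector onto $\ker\frac{\partial}{\partial\theta}\overline{\mathrm{Dir}(g)}$, which satisfies (\ref{eqn:hatanaka_edit}). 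The motion keeps $\overline{\mathrm{Dir}(g)}=[1/\sqrt2\ \ 0\ \ 1/\sqrt2]^\top$. After $T$, apply a pure pitch command $\omega^v\neq0$, $\omega^h=0$.

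Without $U_a$, the elevation of $\overline{\mathrm{Dir}(g)}$ grows at rate $\omega^v$. With $U_a$, put $\gamma=\arctan(1/2)$. Then $\dot S=\omega^v[-2\sin\gamma\cos\frac{\pi}{3}\ \ 0\ \ 2\cos\gamma]^\top$ and $\|S\|=2\sqrt{\cos^2\gamma\cos^2\frac{\pi}{3}+\sin^2\gamma}$, so the elevation grows at rate
\[
\frac{\cos\gamma\cos\frac{\pi}{4}+\sin\gamma\sin\frac{\pi}{4}\cos\frac{\pi}{3}}{\sqrt{\cos^2\gamma\cos^2\frac{\pi}{3}+\sin^2\gamma}}\,\omega^v=\frac{5}{4}\,\omega^v .
\]
This holds whatever $U_a$ does after $T$, since $U_a$ lies in the kernel. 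The two outputs therefore separate immediately.

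So do not spend effort completing (i). Either state the result in the instantaneous form (ii), which is all that you and the paper actually prove for the orientation output. Or add the hypothesis $\omega^v\equiv0$, or more generally any hypothesis making $\frac{\partial}{\partial\theta}\overline{\mathrm{Dir}(g)}\,U_h^\theta$ a function of $\overline{\mathrm{Dir}(g)}$ alone. Under such a hypothesis your integration-from-$G_0$ argument yields Definition~\ref{def:stealth_control} literally.
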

% \begin{proof}
{\it Proof.}
    We consider the time derivative of the average position as follows.
    \begin{align*}
        % \dot{\bar{p}} & = \dfrac{1}{n}\sum_{i\in\mathcal{I}} \dot{p}_i = \dfrac{1}{n}\sum_{i\in\mathcal{I}}\dot{P}_{\mathrm{Pos}} = \dfrac{1}{n}\bm1_{n}^{\top}\otimes I_{3}U_{h\text{Pos}}.
        \dot{\bar{p}} & = \dfrac{1}{n}\sum_{i\in\mathcal{I}}\dot{p}_{i}= \dfrac{1}{n}\bm1_{n}^{\top}\otimes I_{3}\dot{p}                          \\
                      & = \dfrac{1}{n}\bm1_{n}^{\top}\otimes I_{3}(U_{h}^{p}+A^{p}U_{c}^{p}) = \dfrac{1}{n}\bm1_{n}^{\top}\otimes I_{3}U_{h}^{p}.
    \end{align*}
    Meanwhile, the time derivative of the average orientation is as follows.
    \begin{align*}
        % \dot{\bar{r}} & = \dfrac{\partial}{\partial P_{\text{Ang}}}\bar{r}\dot{P}_
        % {\text{Ang}} = \dfrac{\partial}{\partial P_{\text{Ang}}}\bar{r}U_{h\text{Ang}}.
        \dot{\overline{\mathrm{Dir}(g)}} &= \dfrac{\partial}{\partial \theta}\overline{\mathrm{Dir}(g)}\dot\theta
        = \dfrac{\partial}{\partial \theta}\overline{\mathrm{Dir}(g)}(U_{h}^{\theta}+A^{\theta}({g})U_{c}^{\theta})\\
        &=\dfrac{\partial}{\partial \theta}\overline{\mathrm{Dir}(g)}U_{h}^{\theta}.
    \end{align*}
    % \begin{align*}
    %     \dot{\overline{\mathrm{Pos}(p)}} & = \dfrac{1}{n}\sum_{i=1}^{n} \dot{\mathrm{Pos}(p_i)} = \dfrac{1}{n}\sum_{i=1}^{n}\dot{P}_{\mathrm{Pos}} = \dfrac{1}{n}\bm1_{n}^{\top}\otimes I_{3}U_{h\text{Pos}}.
    % \end{align*}
    % Meanwhile, the time derivative of the average orientation is as follows.
    % \begin{align*}
    %     \dot{\overline{\mathrm{Dir}(p)}} & = \dfrac{\partial}{\partial P_{\text{Ang}}}\overline{\mathrm{Dir}(p)}\dot{P}_{\text{Ang}} = \dfrac{\partial}{\partial P_{\text{Ang}}}\overline{\mathrm{Dir}(p)}U_{h\text{Ang}}.
    % \end{align*}
    Therefore, $U_{a}$ does not contribute to the evolution of the control output
    $\bar{g}$, and $U_{a}$ is a stealthy control.
% \end{proof}
\hfill $\square$

In designing a stealthy control, we have to fix a matrix $A(g)$ meeting
(\ref{eqn:hatanaka_edit}).
Now, denote $V = \bm1_{n}\otimes I_{3} \in \mathbb{R}^{3n\times 3}$. We also define $W(g)\in \mathbb{R}^{2n\times 2}$ by eliminating an arbitrary column from the matrix $\left(\dfrac{\partial}{\partial \theta}\overline{\mathrm{Dir}(g)}\right)^\top\in \mathbb{R}^{2n\times 3}$.
Then, the matrices that satisfy (\ref{eqn:hatanaka_edit}) are given, for example, as 
\begin{align}
A^p &= I_{3n} - V(V^\top V)^{-1}V^\top,\\
A^\theta(g) &= I_{2n} - W(g)(W^\top(g) W(g))^{-1}W^\top(g).
\end{align}
Hereafter, we shall mean these specific matrices
when using the symbol $A(g)$.

We are now ready to present the stealthy coverage control.
We first reformulate the collection of the constraints in (\ref{eq:coverage_control}) for all drones into the following form:
\begin{align}
    \label{eq:coverage_control_mat} %
    % U_{c}^{*}     & = \underset{U_c\in\mathbb{R}^{5n}}{\text{arg min}\ \|U_c\|^2} \\
    % \text{s.t.}\  & GU_{c}\leq F,
    BU_{c}\geq C,
\end{align}
where
\begin{align*}
    B     & =\dfrac{\partial b_I}{\partial g},\quad b_{I}=\begin{bmatrix}b_{1,I}\  b_{2,I}  \cdots\ b_{n,I}\end{bmatrix}^{\top}, \\
    C     & =\begin{bmatrix}c_{1}\ c_{2}\ \cdots\ c_{n}\end{bmatrix}^{\top},                                                    \\
    c_{i} & =-\sum_{j\in\mathcal{V}_i(g)}\bigg(\dfrac{\partial b_{i,I}}{\partial{\phi_j}}\bigg)\dot\phi_{j}-\alpha(b_{i,I}).
\end{align*}
% To introduce the stealthy control into this centralized formulation,
% we modify (\ref{eq:coverage_control_mat}) as follows:
To ensure that the quadratic program provides the stealthy control, 
we multiply the matrix $A({g})$ to $U_c$ in (\ref{eq:coverage_control_mat}). 
We then modify the quadratic program as follows:
% However, it is not appropriate to directly use the solution $U_{c}^{*}$ of the quadratic
% programming problem as the autonomous input of the system, since the actual
% control input $A(p)U_{c}^{*}$ applied to the drones cannot be guaranteed to
% satisfy the problem constraints. %
% Accordingly, we first reformulate the quadratic programming problem (\ref{eq:coverage_control_mat})
% as follows.
% \begin{subequations}
    \begin{align}
        U_{c}^{*}     & = \underset{U_c}{\text{arg min}}\ \|U_c\|^2 \notag\\
        \text{s.t.}\  & BA({g})U_{c}\geq C.
        \label{eq:stealthy_coverage_control} %
    \end{align}
% \end{subequations}
Then, the autonomous control input $U_a = A({g})
U_{c}^{*}$ always constitutes a stealthy control, while ensuring the performance constraint.
% Fig.~\ref{fig:block_diagram} illustrates the control architecture of the
% proposed stealthy coverage control. 
The overall system architecture is illustrated in Fig.~\ref{fig:block_diagram}.
Note that the stealthy control (\ref{eq:stealth_control_dynamics})
is inherently centralized, 
and distributed implementation of the present controller
is left as future work.

Besides avoiding the interference of coverage control with manual control, the
present control architecture has another advantage. Namely, concerns about
stability of the human-in-the-loop system can be decoupled from designing
efficient image sampling strategy. The remaining stability-related issue will be
presented in a separate paper, but the passivity-based paradigm presented 
by \cite{Atman:2019} and \cite{Hatanaka:2024}
allows one to prove stability under the assumption of human passivity or passivity shortage.
%
%%============================================================================%%
\section{Simulation}
\label{chp:simulation}
%%============================================================================%%
In this section, we run two simulations to demonstrate the effectiveness of the present semi-autonomous control architecture. 
%proposed stealthy coverage control successfully resolves the potential conflicts between coverage control and human operation. 
% We first validate this by showing that the drone's average
% state responds to the human command even while the coverage control is being
% applied. 
% Furthermore, we evaluate the overall system performance under human
% guidance with visual mesh feedback in real time.
% We compare the quality of the resulting 3D meshes obtained from the 
% coverage control and proposed stealthy coverage control.
As shown in Fig.~\ref{fig:block_diagram}, 
the simulation studies were conducted on a simulator built based on Robot Operating System 2 (ROS2) and Unity.
The control algorithm including the 3D structure reconstruction through NeuralRecon was implemented on ROS2.
Unity replicated real indoor environment with a wardrobe, table and two sofas, and simulated the image sampling therein.

%s are implemented using Robot Operating System (ROS) 2, 
%and the reconstructed 3D mesh is visualized in a Unity
%indoor model including a wardrobe, table, and two sofas.
% The simulations are performed using Robot Operating System (ROS)2, and
% the simulation architechture is shown in Fig.~\ref{fig:block_diagram}.
% For 3D mesh visualization, a simulation environment is constructed in Unity,
% as shown in Fig.~\ref{fig:block_diagram},
% featuring an indoor environment with sofa, wardrobe and table.
% as shown in Fig.~\ref{fig:unity}
We employed Meta Quest 3S (Meta Platforms, Inc.) as an interface between the human operator and the drones.
The reconstructed 3D model and the average pose of the drones displayed in Unity are projected onto the HMD. A human operator determines the velocity commands $u_h$ via the VR controller based on these information. 
Note that we added an additional function such that the operator can keep sending the commands when pressing a button in order to avoid unintentional commands. 
The position at the moment of pressing the button defines a temporary origin of the interface frame, and the displacement from this origin is mapped to translational velocity command $u_h^p$.
The angular velocity $u_h^\theta$ is commanded by deflecting the joystick,
where the vertical and horizontal axes are mapped to the pitch 
and yaw angular velocity commands, respectively.
The commands from the VR controller are scaled to the range of 
$[-0.5,0.5]$ m/s for the translational velocity and 
$[-0.15,0.15]$ rad/s for the angular velocity.
The commands are then sent to ROS2, and added to the control input as (\ref{eq:actual_control_input}).
%The translational velocity command  is specified by the position of the handheld controller relative to the origin of the interface frame. The angular velocity $u_h^\theta$ is commanded by the orientation of the controller relative to the interface frame.

%Releasing the button and joystick brings each commanded velocity to zero.

In the present simulations, the number of drones was set to $n=3$.
The extended class-$\mathcal{K}$ function $\alpha(\cdot)$ was defined as a linear function 
$\alpha(b_{i}) = a b_{i,I}\ (a = 1.0)$.
The target reconstruction field $\mathcal{Q}$ was set as
$[-4, 4]\text{m}\times[-4, 4]\text{m}\times[0.4, 1.2]\text{m}$,
and the drone flight field $\mathcal{P}$ was defined as 
$[-4, 4]\text{m}\times[-4, 4]\text{m}\times[2.0, 2.4]\text{m}$.
% , with $\Theta^{h}=[-\pi, \pi]$ and $\Theta^{v}= (0, \pi/2]$.
The target field $\mathcal{Q}$ was discretized into $m=7\times10^{5}$ cells, 
each with a volume of
$0.2\text{m}\times 0.2\text{m}\times0.2\text{m}\times0.3\text{rad}\times0.3\text{rad}$.
In addition to the conditions in (\ref{eq:stealthy_coverage_control}),
constraints on the flight field $\mathcal{P}$ and  
the gimbal pitch angle $\varphi_i^v$ were imposed as described in \cite{Hanif:2025},
and the coverage control inputs were limited to $|u_{ci}|\leq0.05$.
% we imposed additional constraints to ensure that 
% the drones remain within the flight field $\mathcal{P}$ and 
% to enforece limits on the gimbal pitch angle $\varphi_i^v$, as described in \cite{Hanif:2025}.
The parameters were set as follows: $D=1.4$, $\sigma_{1}=0.15$, $\sigma_{2}=0.15$,
$\sigma_{3}=0.3$, $\gamma=0.0004$, $\delta=0.5$, $\overline\sigma_{1}=0.
12$, $\overline\sigma_{2}=0.12$.
%%============================================================================%%
\subsection{Demonstration of Stealthy Control}
%%============================================================================%%
\begin{figure}[t]
  \begin{center}
    \includegraphics[width=0.75\columnwidth]{
      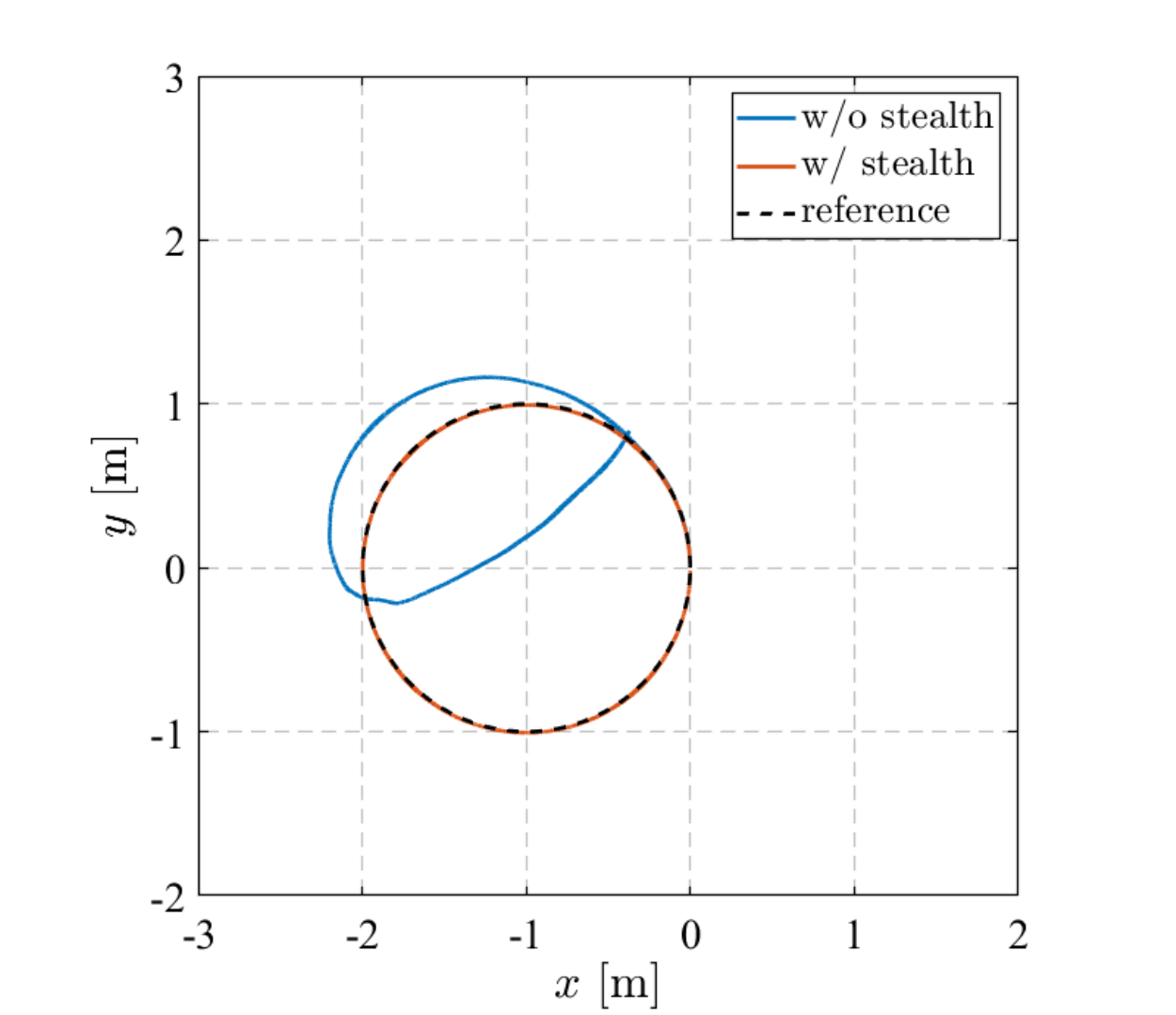
    }
    \caption{Comparison of the trajectories of the average position for a circular path with and without
    the stealthy control.}
    \label{fig:circular_path}
  \end{center}
\end{figure}

The first simulation was conducted in order to demonstrate the effectiveness of the stealthy coverage control.
To this end, we applied the following $u_{h}$ generating a circular trajectory without using manual control.
\begin{align*}
  u_{h}=\begin{bmatrix}-\omega\sin\omega t & \omega\cos\omega t & 0 & 0 & 0\end{bmatrix}^{\top},
\end{align*}
where $\omega$ is the angular velocity of the circular trajectory.
In this simulation, we took $\omega = 0.05$rad/s.

Let us confirm the evolution of the average state $\bar g$ while activating the coverage control.
The simulation results are shown
in Figs.~\ref{fig:circular_path} and \ref{fig:circular_trajectory_ori},
where the initial average state was set to $\bar{g}=[0\ 0\ 2.2\ 0\ 0\ 1]^\top$.
Fig.~\ref{fig:circular_path} %and \ref{fig:circular_trajectory_pos} show the
shows the trajectories of the average position with (red) and without the stealthy control (blue).
%, respectively.
Without the stealthy control, the average position is drifted by coverage control and it fails to track the circular path, which would cause the difficulties in the human operation.
%deviating from it due to the conflicts between the
%coverage control and command $u_h$.
In contrast, the proposed stealthy coverage control enables the average position
to exactly follow the desired path.
% Fig. \ref{fig:circular_trajectory_ori} shows the time response of each element of the optical axis of the average drone $\overline{\mathrm{Dir}(g)}$ with (bottom) and without the stealthy control (top).
Fig. \ref{fig:circular_trajectory_ori} shows the time response of the deviations of the average optical axis from the initial values.
It is confirmed from this figure that 
the average axis remains constant with the stealthy coverage control, while they are slightly drifted without the stealthy control.
These results demonstrate that the stealthy coverage control successfully
decouples the manual control and the autonomous coverage control.

%%============================================================================%%
\subsection{Demonstration of 3D  Reconstruction}
%%============================================================================%%

In this subsection, we conducted a human-in-the-loop simulation in order to demonstrate the practical benefit of the human intervention for the 3D structure reconstruction. A video of the present simulation can be found at
\url{https://youtu.be/I8UZhCcUvb4}.
\begin{figure}[t]
  \centering
          \includegraphics[width=1\columnwidth]{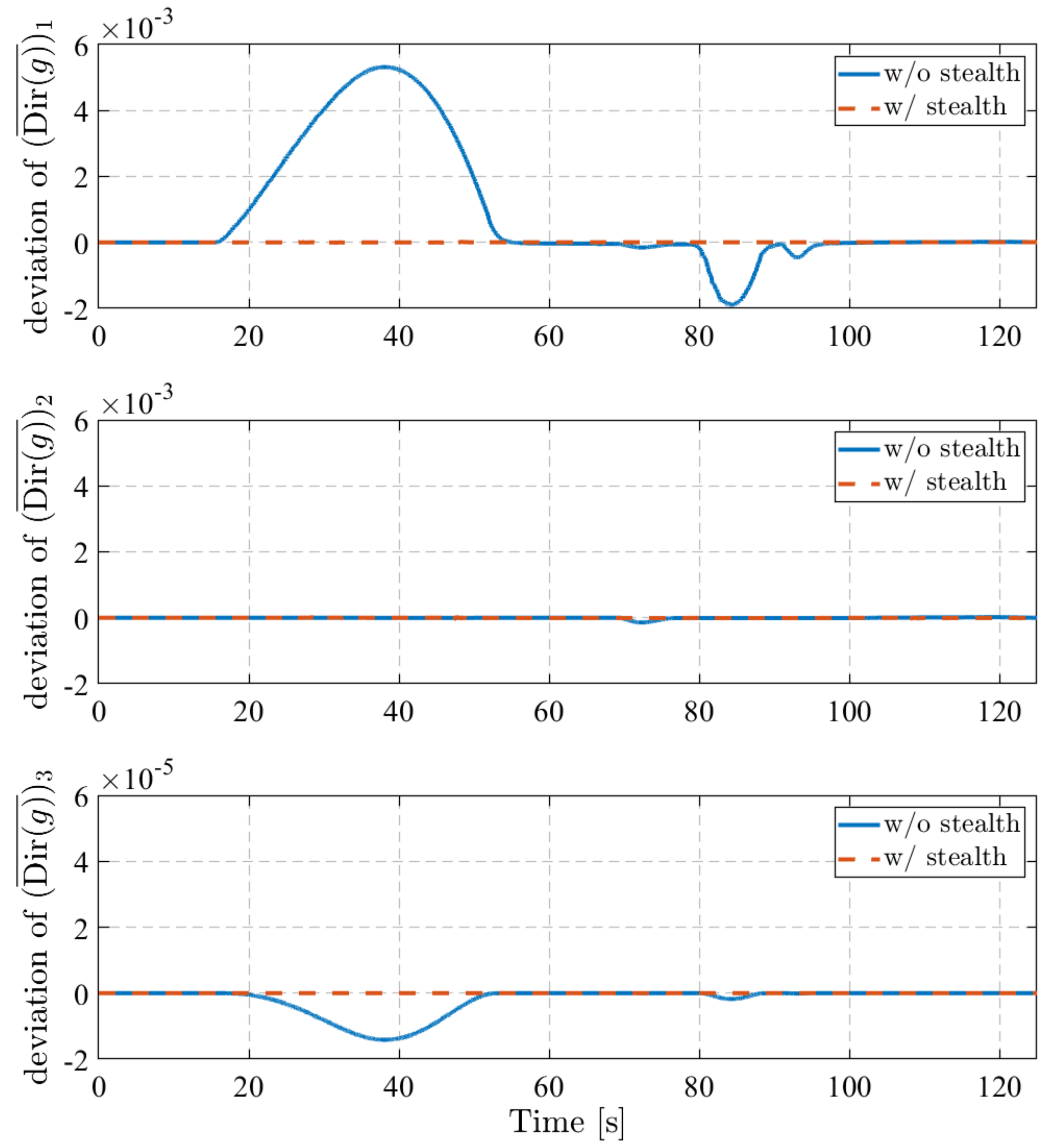}
          \caption{Time responses of the deviations of the average optical axis from their initial values.}
          \label{fig:circular_trajectory_ori}
  \end{figure}

Fig.~\ref{fig:stealth_snap_mesh} presents snapshots of the simulation.
The colormaps in the top figures represent the values of $\phi_{j}$ (red: high, blue: low), and the
white/blue arrows indicate the average state of the drones.
In the beginning, the human operator does not provide any command, 
allowing the system to progressively reconstruct the environment up to $t=600$s through autonomous coverage control and NeuralRecon.
We observe that the average pose is invariant during the period due to the stealthy control.
After $t=600$s, the operator identifies areas with insufficient reconstruction accuracy such as the side and upper surfaces of the furniture items, and 
guides the average pose of the drones toward such regions until $t=2100$s.
We see that the quality of the model around the top surfaces of the sofas and the wardrobe is enhanced through intensive image sampling around these specific areas by manual control.
% As coverage progresses over time, areas of higher importance appear around the average
% values ($t=300$s).
% Consequently, the drones begin to focus their observations
% near the average state ($t=900$s, $t=1200$s). 
% After further time, this focused observation expands, 
% leading to coverage of the entire environment ($t=2100$s).
% Thus, the stealthy coverage control successfully performs its coverage task,
% achieving both overall coverage and focused observations near the average values.
\begin{figure*}[t]
  \begin{center}
      \includegraphics[width=15.3cm]{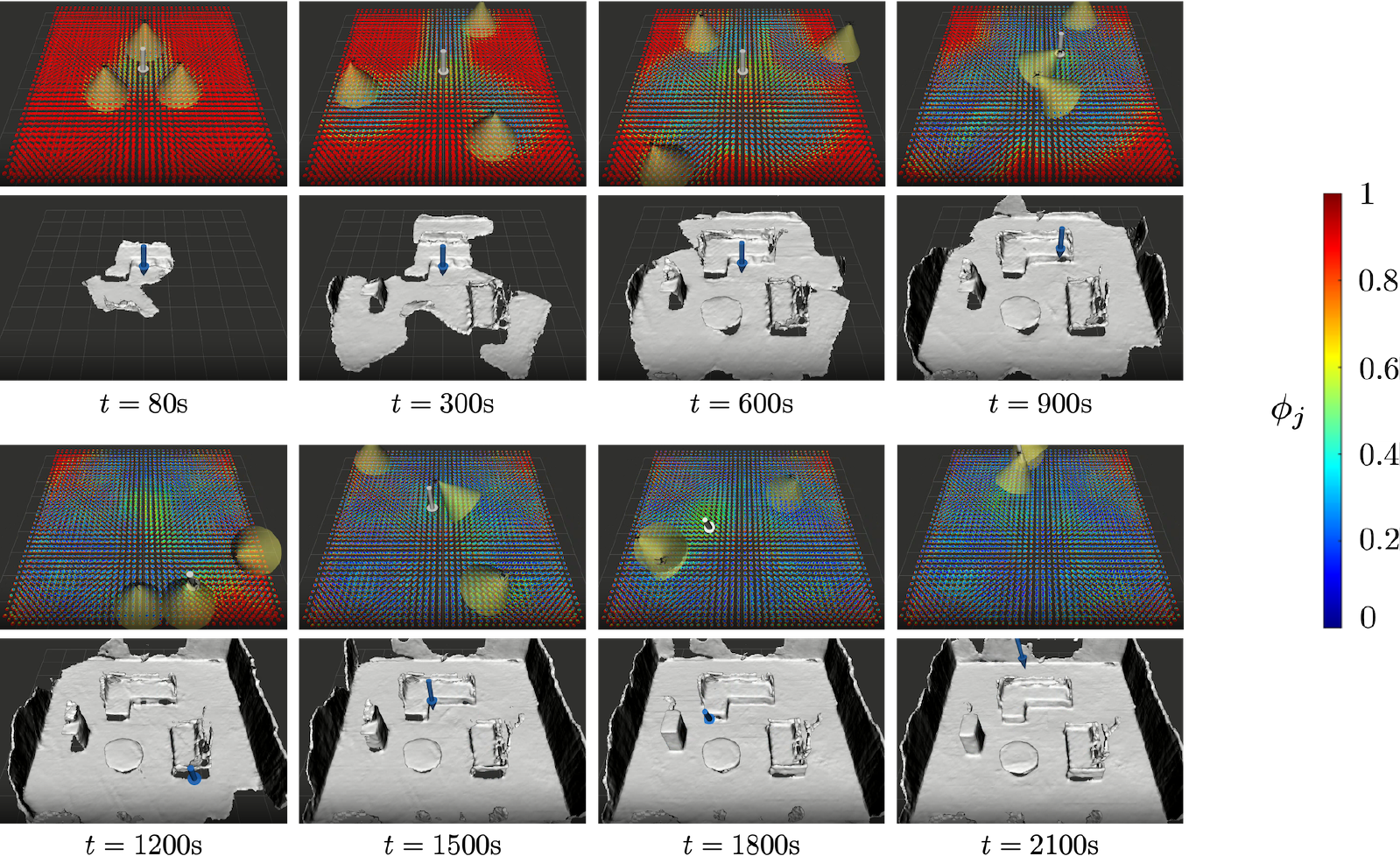}
      \caption{Snapshots of the three drones simulation with stealthy coverage control. 
        The snapshots show the evolution of the reconstructed structure and 
        the distribution of $\phi_{j}$ over time.}
      \label{fig:stealth_snap_mesh}
  \end{center}
\end{figure*}
\begin{figure*}[tbp]
  % \centering
  %--- (a) ---
  \begin{minipage}{0.3\linewidth}
    \centering
    \includegraphics[width=\linewidth]{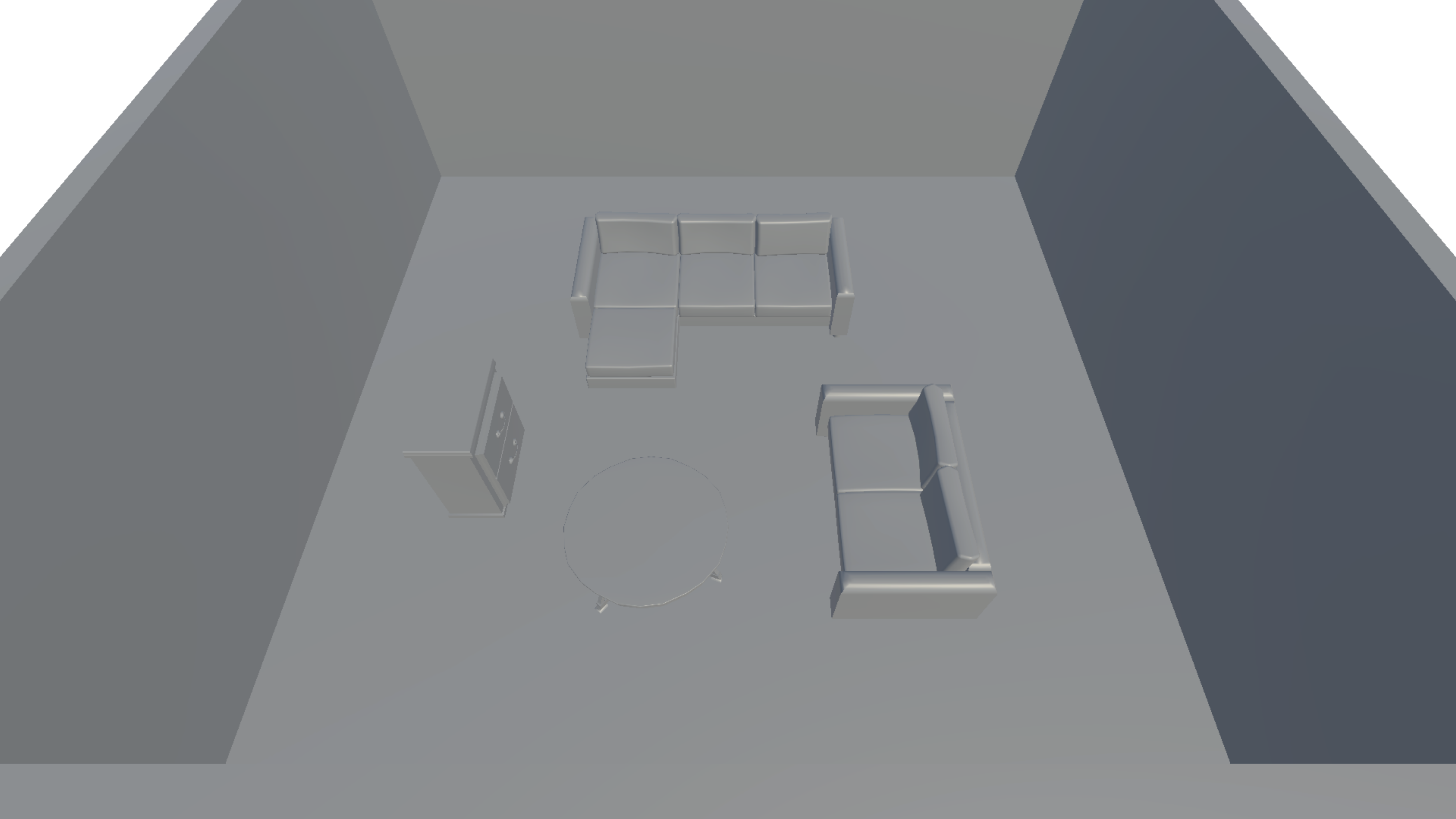}
    \subcaption{} \label{fig:ground_truth}
  \end{minipage}
  \hfill
  %--- (b) ---
  \begin{minipage}{0.3\linewidth}
    \centering
    \includegraphics[width=\columnwidth]{
      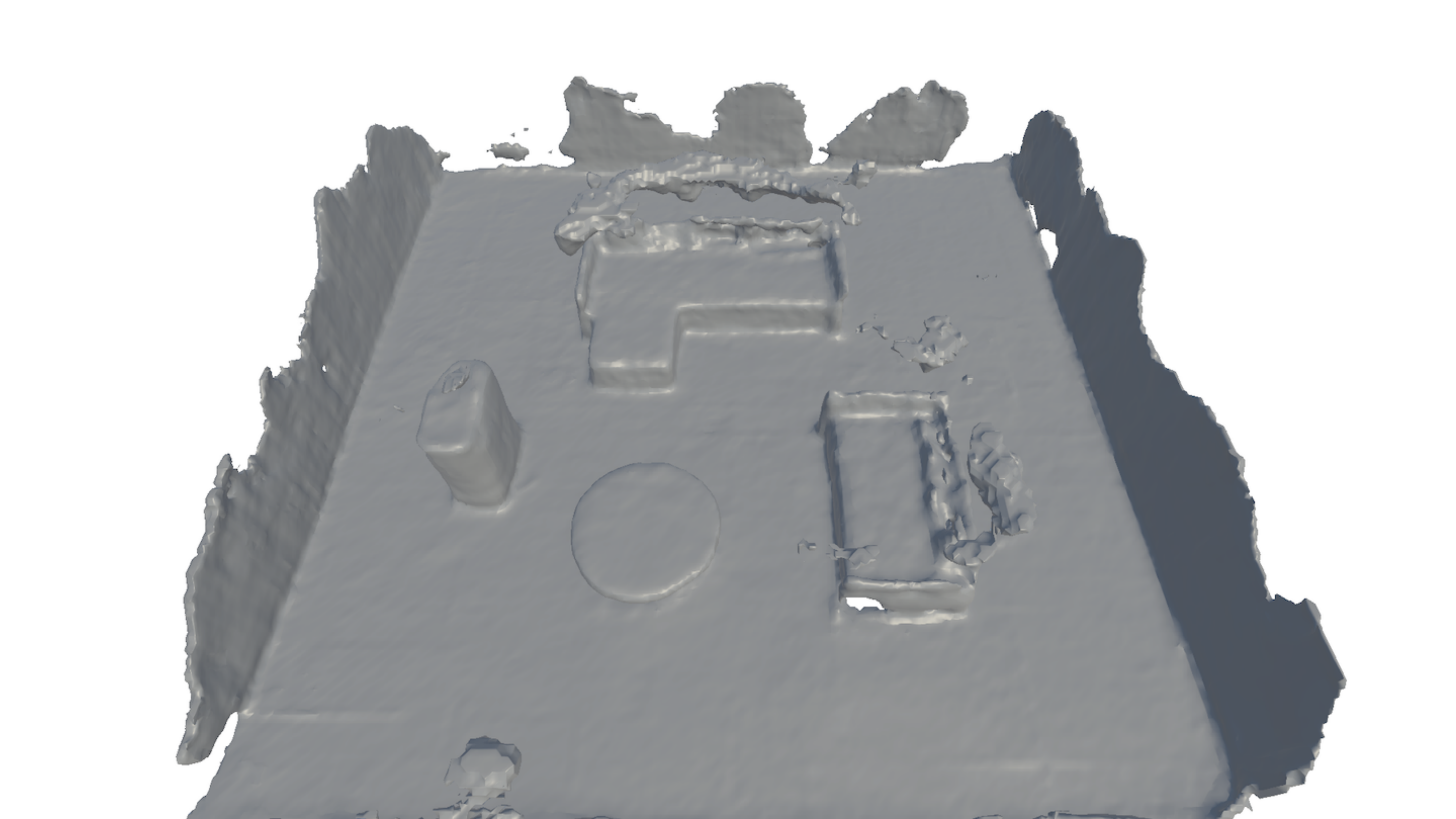
    }
    \subcaption{} \label{fig:mesh_comparison_without}
  \end{minipage}
  \hfill
  %--- (c) ---
  \begin{minipage}{0.3\linewidth}
    \centering
    \includegraphics[width=\columnwidth]{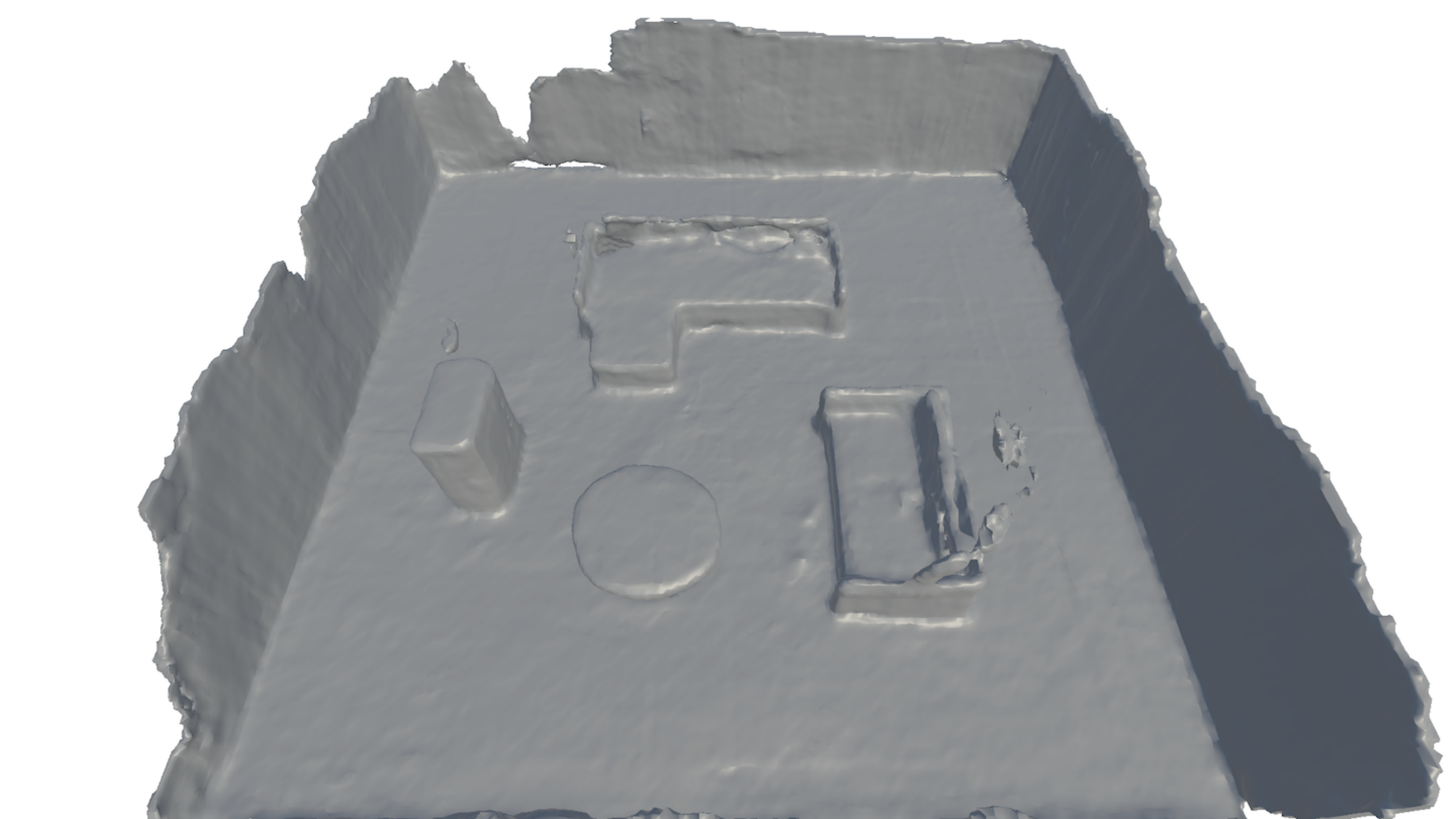}
    \subcaption{} \label{fig:mesh_comparison_with}
  \end{minipage}
  \caption{Comparison between reconstructed models: (a) Ground truth model in Unity, (b) reconstructed model without
  the human intervention, (c) reconstructed model with the human intervention.}
  \label{fig:mesh_comparison}
\end{figure*}

Fig.~\ref{fig:mesh_comparison_without} and Fig.~\ref{fig:mesh_comparison_with}
compare the final 3D models generated from the fully autonomous coverage control and
the proposed semi-autonomous system with stealthy coverage control, respectively.
Note that both simulations ended when $J$ was close to zero and a human determined that the model would not change any further.
As compared to the ground truth model in Fig. \ref{fig:ground_truth}, the model in Fig.~\ref{fig:mesh_comparison_without} contains some objects in the air above the top of the sofa that do not exist in practice. In addition, the mesh on the top surface of the wardrobe is also slightly distorted.
On the other hand, the model generated by the proposed method is visibly more detailed,
in particular, non-existent objects have been almost removed, and the top surface of the wardrobe gets nearly rectangular.
This improvement in accuracy is attributed to the human interventions and the resulting intensive image sampling around the areas having high structural complexity.

In summary, we conclude that the present semi-autonomous control architecture achieves complexity-aware flexible image sampling, 
which contributes to the quality enhancement of the reconstructed 3D model.
\section{Conclusion}
\label{chp:conclusion}
%%============================================================================%%
In this paper, we proposed a semi-autonomous image sampling strategy for 3D reconstruction with adaptation to structural complexity of the scene by leveraging the flexible reasoning and situational recognition capabilities of humans.
In the present system, a human operator identifies areas that needs more images for improving the model quality based on the real-time 3D model, and then navigates the average pose of the drones to such areas.
To this end, we developed a novel coverage control reflecting the human intention on the location to be sampled more. 
Subsequently, in order to avoid operational conflicts between manual control and autonomous coverage control, 
we designed a novel stealthy coverage control that decoupled the drone motion for efficient image sampling from human navigation.
Simulation studies on a Unity/ROS2-based simulator demonstrated that the present semi-autonomous system outperformed the one without the human intervention in the sense of the reconstructed model quality.

There are several issues that remain to be addressed in the future.
First, we need to analyze closed-loop stability of the human-in-the-loop system. 
Additionally, human modeling and analysis must be conducted to examine whether the operator meets stability conditions in theory.
%Associated with this issue, whether human passivity/passivity shortage revealed in \cite{Atman:2019} and \cite{Hatanaka:2024} is applied to the present case including rotational control should be discussed based on the human operation data. 
Distributed implementation of the stealthy coverage control is also left as future work.
We also have to conduct experimental studies beyond the idealized simulation in order to reveal the practical benefit of the present control strategy.

Approval of all ethical and experimental procedures and protocols was granted by the Human Subjects Research Ethics Review Committee in Institute of Science Tokyo under Application No. 2025151, and performed in line with the Helsinki on Ethical
Principles for Medical Research and Ethical Guidelines for Medical and Biological Research Involving Human Subjects. %

% \begin{ack}
% Place acknowledgments here.
% \end{ack}

\bibliography{ifacconf} % bib file to produce the bibliography

\begin{thebibliography}{24}
\providecommand{\natexlab}[1]{#1}
\providecommand{\url}[1]{\texttt{#1}}
\providecommand{\urlprefix}{URL }
\expandafter\ifx\csname urlstyle\endcsname\relax
  \providecommand{\doi}[1]{doi:\discretionary{}{}{}#1}\else
  \providecommand{\doi}{doi:\discretionary{}{}{}\begingroup \urlstyle{rm}\Url}\fi

\bibitem[{Ames et~al.(2017)Ames, Coogan, Egerstedt, Notomista, Sreenath, and Tabuada}]{Ames:2017}
Ames, A.D., Coogan, S., Egerstedt, M., Notomista, G., Sreenath, K., and Tabuada, P. (2017).
\newblock Control barrier function based quadratic programs for safety critical systems.
\newblock \emph{IEEE Transactions on Automatic Control}, 62(8), 3861--3876.

\bibitem[{Antonelli et~al.(2008)Antonelli, Arrichiello, and Chiaverini}]{Antonelli:2008}
Antonelli, G., Arrichiello, F., and Chiaverini, S. (2008).
\newblock The null-space-based behavioral control for autonomous robotic systems.
\newblock \emph{Intelligent Service Robotics}, 1(1), 27--39.

\bibitem[{Atman et~al.(2019)Atman, Noda, Funada, Yamauchi, Hatanaka, and Fujita}]{Atman:2019}
Atman, M., Noda, K., Funada, R., Yamauchi, J., Hatanaka, T., and Fujita, M. (2019).
\newblock On passivity-shortage of human operators for a class of semi-autonomous robotic swarms.
\newblock \emph{IFAC-PapersOnLine}, 51(34), 21--27.

\bibitem[{Cort{\'e}s et~al.(2005)Cort{\'e}s, Mart{\'i}nez, and Bullo}]{Cortes:2005}
Cort{\'e}s, J., Mart{\'i}nez, S., and Bullo, F. (2005).
\newblock Spatially-distributed coverage optimization and control with limited-range interactions.
\newblock \emph{ESAIM: Control Optimisation and Calculus of Variations}, 11(4), 691--719.

\bibitem[{Dan et~al.(2021)Dan, Hatanaka, Yamauchi, Shimizu, and Fujita}]{Dan:2021}
Dan, H., Hatanaka, T., Yamauchi, J., Shimizu, T., and Fujita, M. (2021).
\newblock Persistent object search and surveillance control with safety certificates for drone networks based on control barrier functions.
\newblock \emph{Frontiers in Robotics and AI}, 8, 740460.

\bibitem[{Diaz-Mercado et~al.(2017)Diaz-Mercado, Lee, and Egerstedt}]{Diaz:2017}
Diaz-Mercado, Y., Lee, S.G., and Egerstedt, M. (2017).
\newblock Human-swarm interactions via coverage of time-varying densities.
\newblock In Y.~Wang and F.~Zhang (eds.), \emph{Trends in Control and Decision-Making for Human-Robot Collaboration Systems}, 357--385. Springer.

\bibitem[{Edmonds and Yi(2021)}]{Edmonds:2021}
Edmonds, M. and Yi, J. (2021).
\newblock Efficient multi-robot inspection of row crops via kernel estimation and region-based task allocation.
\newblock In \emph{2021 IEEE International Conference on Robotics and Automation}, 8919--8926.

\bibitem[{Egerstedt(2021)}]{Magnus:2021}
Egerstedt, M. (2021).
\newblock \emph{Robot ecology: Constraint-based control design for long duration autonomy}.
\newblock Princeton University Press.

\bibitem[{Franchi et~al.(2012{\natexlab{a}})Franchi, Secchi, Ryll, Bulthoff, and Giordano}]{Franchi:2012a}
Franchi, A., Secchi, C., Ryll, M., Bulthoff, H.H., and Giordano, P.R. (2012{\natexlab{a}}).
\newblock Shared control : Balancing autonomy and human assistance with a group of quadrotor {UAV}s.
\newblock \emph{IEEE Robotics \& Automation Magazine}, 19(3), 57--68.

\bibitem[{Franchi et~al.(2012{\natexlab{b}})Franchi, Secchi, Son, Bulthoff, and Giordano}]{Franchi:2012b}
Franchi, A., Secchi, C., Son, H.I., Bulthoff, H.H., and Giordano, P.R. (2012{\natexlab{b}}).
\newblock Bilateral teleoperation of groups of mobile robots with time-varying topology.
\newblock \emph{IEEE Transactions on Robotics}, 28(5), 1019--1033.

\bibitem[{Hanif et~al.(2025)Hanif, Terunuma, Sumino, Cheng, and Hatanaka}]{Hanif:2025}
Hanif, M., Terunuma, R., Sumino, T., Cheng, K., and Hatanaka, T. (2025).
\newblock Coverage-{Recon}: Coordinated multi-drone image sampling with online map feedback.
\newblock \emph{arXiv:2510.18347}.

\bibitem[{Hatanaka et~al.(2024)Hatanaka, Mochizuki, Sumino, Maestre, and Chopra}]{Hatanaka:2024}
Hatanaka, T., Mochizuki, T., Sumino, T., Maestre, J.M., and Chopra, N. (2024).
\newblock Human modeling and passivity analysis for semi-autonomous multi-robot navigation in three dimensions.
\newblock \emph{IEEE Open Journal of Control Systems}, 3, 45--57.

\bibitem[{Hatanaka et~al.(2023)Hatanaka, Yamauchi, Fujita, and Handa}]{Hatanaka:2023}
Hatanaka, T., Yamauchi, J., Fujita, M., and Handa, H. (2023).
\newblock Contemporary issues and advances in human-robot collaborations.
\newblock In A.M. Annaswamy, P.P. Khargonekar, F.~Lamnabhi-Lagarrigue, and S.K. Spurgeon (eds.), \emph{Cyber-Physical-Human Systems: Fundamentals and Applications}, 365--399. Wiley.

\bibitem[{Lee and Spong(2005)}]{Lee:2005}
Lee, D. and Spong, M.W. (2005).
\newblock Bilateral teleoperation of multiple cooperative robots over delayed communication networks: Theory.
\newblock In \emph{Proceedings of the 2005 IEEE International Conference on Robotics and Automation}, 360--365.

\bibitem[{Lu et~al.(2024)Lu, Hanif, Shimizu, and Hatanaka}]{Lu:2024}
Lu, Z., Hanif, M., Shimizu, T., and Hatanaka, T. (2024).
\newblock Angle-aware coverage with camera rotational motion control.
\newblock \emph{SICE Journal of Control, Measurement, and System Integration}, 17(1), 211--221.

\bibitem[{Music et~al.(2017)Music, Salvietti, Dohmann, Chinello, Prattichizzo, and Hirche}]{Music:2017}
Music, S., Salvietti, G., Dohmann, P., Chinello, F., Prattichizzo, D., and Hirche, S. (2017).
\newblock Human-multi-robot teleoperation for cooperative manipulation tasks using wearable haptic devices.
\newblock In \emph{IEEE/RSJ International Conference on Intelligent Robots and Systems}.

\bibitem[{Ott et~al.(2008)Ott, Kugi, and Nakamura}]{Ott:2008}
Ott, C., Kugi, A., and Nakamura, Y. (2008).
\newblock Resolving the problem of non-integrability of nullspace velocities for compliance control of redundant manipulators by using semidefinite lyapunov functions.
\newblock In \emph{2008 IEEE International Conference on Robotics and Automation}, 1999--2004.

\bibitem[{Palacios-Gas{\'o}s et~al.(2016)Palacios-Gas{\'o}s, Montijano, Sag{\"u}{\'e}s, and Llorente}]{Palacios:2016}
Palacios-Gas{\'o}s, J.M., Montijano, E., Sag{\"u}{\'e}s, C., and Llorente, S. (2016).
\newblock Distributed coverage estimation and control for multirobot persistent tasks.
\newblock \emph{IEEE Transactions on Robotics}, 32(6), 1444--1460.

\bibitem[{Schwager et~al.(2011)Schwager, Julian, Angermann, and Rus}]{Schwager:2011}
Schwager, M., Julian, B.J., Angermann, M., and Rus, D. (2011).
\newblock Eyes in the sky: Decentralized control for the deployment of robotic camera networks.
\newblock \emph{Proceedings of the IEEE}, 99(9), 1541--1561.

\bibitem[{Seraj and Gombolay(2020)}]{Seraj:2020}
Seraj, E. and Gombolay, M. (2020).
\newblock Coordinated control of {UAV}s for human-centered active sensing of wildfires.
\newblock In \emph{2020 American Control Conference}, 1645--1652.

\bibitem[{Shimizu et~al.(2022)Shimizu, Yamashita, Hatanaka, Uto, Mammarella, and Dabbene}]{Shimizu:2022}
Shimizu, T., Yamashita, S., Hatanaka, T., Uto, K., Mammarella, M., and Dabbene, F. (2022).
\newblock Angle-aware coverage control for 3-{D} map reconstruction with drone networks.
\newblock \emph{IEEE Control Systems Letters}, 6, 1831--1836.

\bibitem[{Sun et~al.(2021)Sun, Xie, Chen, Zhou, and Bao}]{Sun:2021}
Sun, J., Xie, Y., Chen, L., Zhou, X., and Bao, H. (2021).
\newblock Neural{Recon}: Real-time coherent 3{D} reconstruction from monocular video.
\newblock In \emph{Proceedings of the IEEE/CVF Conference on Computer Vision and Pattern Recognition}, 15593--15602.

\bibitem[{Torres et~al.(2016)Torres, Pelta, Verdegay, and Torres}]{Torres:2016}
Torres, M., Pelta, D.A., Verdegay, J.L., and Torres, J.C. (2016).
\newblock Coverage path planning with unmanned aerial vehicles for 3{D} terrain reconstruction.
\newblock \emph{Expert Systems with Applications}, 55, 441--451.

\bibitem[{Xiao et~al.(2021)Xiao, Tan, and Wang}]{Xiao:2021}
Xiao, S., Tan, X., and Wang, J. (2021).
\newblock A simulated annealing algorithm and grid map-based {UAV} coverage path planning method for 3{D} reconstruction.
\newblock \emph{Electronics}, 10(7), 853.

\end{thebibliography}
% with bibtex (preferred)

% \appendix
% \section{A summary of Latin grammar}    % Each appendix must have a short title.
% \section{Some Latin vocabulary}              % Sections and subsections are supported
% in the appendices.
\end{document}